\newtheorem{kq}{Key Question}
\newtheorem{game}{Game}
\newtheorem{prob}{Problem}
\newtheorem{defin}{Definition}
\newtheorem{lem}{Lemma}
\newtheorem{thm}{Theorem}
\newtheorem{cor}{Corollary}
\newtheorem{observation}{Observation}
\begin{document}

\title{Information Elicitation from Decentralized Crowd Without Verification}

\author{Kexin Chen, Chao Huang, and Jianwei Huang, \textit{IEEE Fellow}
	\thanks{
		This work is supported by the National Natural Science Foundation of China (Project 62271434), Shenzhen Science and Technology Program (Project JCYJ20210324120011032), Guangdong Basic and Applied Basic Research Foundation (Project 2021B1515120008), Shenzhen Key Lab of Crowd Intelligence Empowered Low-Carbon Energy Network (No. ZDSYS20220606100601002), and the Shenzhen Institute of Artificial Intelligence and Robotics for Society.
		
		Kexin Chen is with Shenzhen Institute of Artificial Intelligence and Robotics for Society (AIRS) and the School of Science and Engineering, The Chinese University of Hong Kong, Shenzhen, Shenzhen 518172, China (e-mail: kexinchen2@link.cuhk.edu.cn). 
		
		Chao Huang is with the Department of Computer Science at the University of California, Davis (email: felixchaohuang@gmail.com).
		
		Jianwei Huang is with School of Science and Engineering, Shenzhen Institute of Artificial Intelligence and Robotics for Society, The Chinese University of Hong Kong, Shenzhen, Shenzhen 518172, China (corresponding author, e-mail: jianweihuang@cuhk.edu.cn).}
}

\maketitle

\begin{abstract}

Information Elicitation Without Verification (IEWV) refers to the problem of eliciting high-accuracy solutions from crowd members when the ground truth is unverifiable.
A high-accuracy team solution (aggregated from members' solutions) requires members' effort exertion, which should be incentivized properly. Previous research on IEWV mainly focused on scenarios where a central entity (e.g., the crowdsourcing platform) provides incentives to motivate crowd members. Still, the proposed designs do not apply to practical situations where no central entity exists. This paper studies the overlooked decentralized IEWV scenario, where crowd members act as both incentive contributors and task solvers. We model the interactions among members with heterogeneous team solution accuracy valuations as a two-stage game, where each member decides her incentive contribution strategy in Stage 1 and her effort exertion strategy in Stage 2. We analyze members' equilibrium behaviors under three incentive allocation mechanisms: Equal Allocation (EA), Output Agreement (OA), and Shapley Value (SV). We show that at an equilibrium under any allocation mechanism, a low-valuation member exerts no more effort than a high-valuation member. Counter-intuitively, a low-valuation member provides incentives to the collaboration while a high-valuation member does not at an equilibrium under SV. This is because a high-valuation member who values the aggregated team solution more needs fewer incentives to exert effort. In addition, when members' valuations are sufficiently heterogeneous, SV leads to team solution accuracy and social welfare no smaller than EA and OA.

\end{abstract}

\section{Introduction} \label{sec: intro}
\subsection{Motivations} \label{subsec: motivation}
The Wisdom of Crowds refers to the fact that groups can often be collectively smarter than individual experts in decision-making \cite{surowiecki2005wisdom}, which has been demonstrated by a series of studies (e.g., \cite{budescu2015identifying, fiechter2021wisdom, kammer2017potential}).
In practical crowd decision applications such as Amazon Mechanical Turk \cite{mturk}, a key challenge is to acquire high-accuracy individual solutions (before they are aggregated into the group's team solution), especially when the ground truth is unavailable. This unverifiable scenario can happen, for example, when the ground truth is related to subjective evaluation (e.g., individuals' preferences for a restaurant \cite{openrice}) or is costly and time-consuming to obtain (e.g., verifying peer assessment results in MOOCs \cite{mooc}). Such problems are known as Information Elicitation Without Verification (IEWV) \cite{huang2020using}. 

To obtain high-accuracy individual solutions to IEWV problems, members need to exert effort (e.g., consumption of time, energy, and computational resources), which can be costly and should be compensated with proper incentives. There have been increasing studies on IEWV incentive design over the past decade (e.g., \cite{chen2019prior, huang2021strategic, huang2022online,huang2021eliciting}). These studies focused on the scenario where a central entity coordinates the task-solving of crowd members. 
For example, in \cite{huang2021eliciting}, the crowdsourcing platform incentivizes members to exert efforts and truthfully report their solutions.

However, there exist many other IEWV scenarios that do not operate under the orchestration of a central entity. 
For example, a team of students can work together on a course project without a centralized teacher coordination \cite{beneroso2021team}. 
Researchers from different institutes can collaborate on a scientific project and aim to publish high-impact papers based on the aggregation of their research results \cite{wuchty2007increasing,ma2008patent}. In both examples, people team up to finish a task related to their own interests and act in a decentralized manner without an external central entity providing incentives for collaboration.
This paper focuses on this practical yet overlooked decentralized decision scenario.

It is challenging to elicit a high-accuracy team solution (aggregated from heterogeneous members' solutions) to an IEWV problem in a decentralized fashion. First, obtaining a high-accuracy team solution requires members' effort exertion, so members should be properly incentivized. However, in a decentralized setting, there exists no central entity providing incentives to motivate members' effort exertion.
Second, different from the verifiable cases (e.g., prediction markets \cite{witkowski2023incentive}) where members' incentives are computed based on event outcomes, the incentive design is more complex due to the lack of ground truth verification in our setting. Furthermore, no central entity coordinates the design and implementation of incentive allocation.
Hence, it is critical yet very challenging to consider the framework where members act as both incentive contributors and task solvers. Intuitively, more incentives will motivate members' effort exertion and lead to a high-accuracy team solution. However, as each member also acts as an incentive contributor, a high incentive contribution is possible to hurt one's payoff. Therefore, eliciting a high-accuracy team solution without ground truth verification from a decentralized crowd involves highly coupled and complicated member interactions. 

\begin{figure}[h]
	\centerline{\includegraphics[width=.95\linewidth]{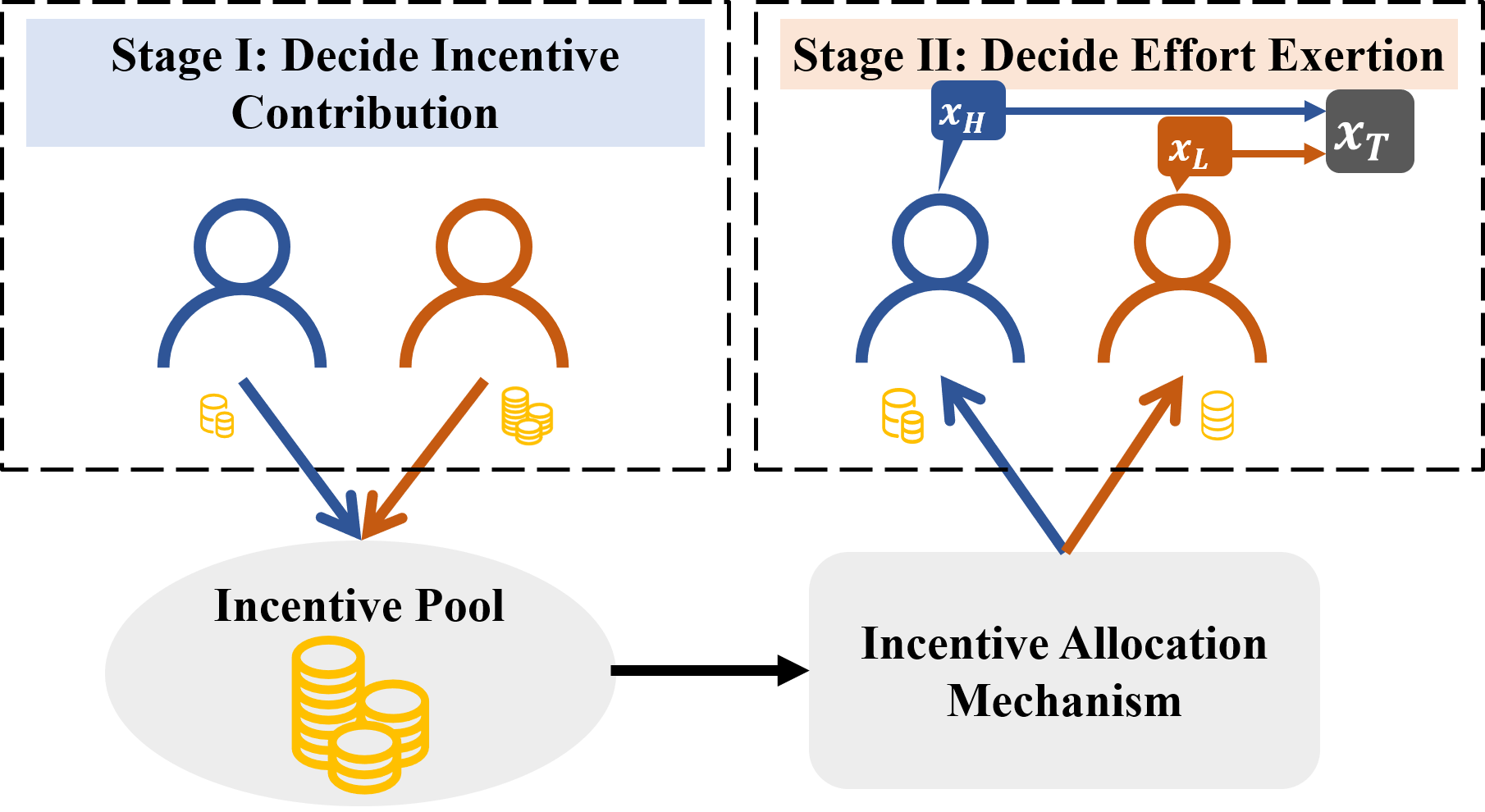}}
	\caption{Two-stage interaction: In Stage \uppercase\expandafter{\romannumeral1}, heterogeneous members choose whether to contribute incentives to the incentive pool. In Stage \uppercase\expandafter{\romannumeral2}, members first decide whether to exert effort in solving the task. Then, members report their solutions ($x_H,x_L$), which will be aggregated into a team solution ($x_T$). Finally, incentives are allocated to the members from the pool according to the predefined incentive allocation mechanism.}
	\label{model ill}
\end{figure}

To understand how decentralized members make decisions, we formulate the interaction among members with heterogeneous valuations on the team solution as a two-stage game, where each member decides her incentive contribution strategy in Stage \uppercase\expandafter{\romannumeral1} and her effort exertion strategy in Stage \uppercase\expandafter{\romannumeral2} (see also Fig. \ref{model ill}).
Members' incentive contributions in Stage \uppercase\expandafter{\romannumeral1} will be pooled together and allocated in Stage \uppercase\expandafter{\romannumeral2} based on a predefined incentive allocation mechanism. Their reported solutions will be aggregated into a team solution via the widely adopted majority rule \cite{huang2021eliciting}. The team performance is assessed by: \textit{1) team solution accuracy}: the probability that the aggregated team solution is consistent with the unknown ground truth. \textit{2) social welfare}: members' overall payoffs.

In this paper, we focus on three incentive allocation mechanisms (which can be implemented without a centralized entity): \textit{1) Equal Allocation (EA)}, which is a benchmark where members equally split the pooled incentives. \textit{2) Output Agreement (OA)}, which is a widely adopted method in centralized IEWV problem \cite{liu2016learning}. Under OA, a member obtains a reward if the reported solution agrees with the majority. \textit{3) Shapley Value (SV)}, which allocates the pooled incentives among members based on their contributions characterized by the Shapley values \cite{shapley201617}. More specifically, each member receives an incentive proportional to her average marginal contribution to the team solution accuracy under SV.

In this paper, we aim to address the following key questions.
\begin{kq}
	Given an incentive allocation mechanism (EA, OA, or SV), how should the decentralized members decide their equilibrium incentive contribution and effort exertion strategies?
\end{kq}
\begin{kq}
	Which allocation mechanism achieves the highest team solution accuracy and highest social welfare?
\end{kq}

\subsection{Key Contributions} \label{sec: contribution}
The key contributions of this paper are as follows.
\begin{itemize}
	\item \textit{Decentralized IEWV Problem}: 
	To the best of our knowledge, this is the first attempt to study the critical yet under-explored decentralized IEWV scenario. Such a scenario frequently arises in real-world settings where multiple members collaborate to solve a task without the coordination of a central entity, such as completing a joint research project.
	\item \textit{Decentralized Incentive Design}:
	We propose a decentralized incentive framework where members act as both incentive contributors and task solvers. We formulate the highly coupled interactions among members as a two-stage game. The incentives will be first pooled from members and then allocated back to members to encourage effort exertion. Our design provides practical guidance for the decentralized IEWV problem.
	\item \textit{Equilibrium Analysis}:
	We characterize the equilibria under three incentive allocation mechanisms, i.e., EA, OA, and SV. Our analysis reveals that under these mechanisms, the low-valuation member exerts no more effort than the high-valuation member in all equilibria. Counter-intuitively, a low-valuation member provides incentives to the collaboration while a high-valuation member does not at an equilibrium under SV. This is because a high-valuation member who values the team solution accuracy more needs fewer incentives to exert effort.
	\item \textit{Performance Comparison}:
	When members have sufficiently heterogeneous valuations on team solution accuracy, SV leads to team solution accuracy and social welfare no smaller than EA and OA because SV can better incentivize effort exertion. Specifically, a member under SV can always obtain a larger proportion of the incentive amount by exerting effort, while a member under EA or OA may not achieve the same outcome.
\end{itemize}

The remainder of this paper is organized as follows. In Section \ref{sec: model}, we introduce the model. In Section \ref{sec: NE}, we analyze members' equilibrium decisions under three allocation mechanisms. We theoretically and numerically compare the three allocation mechanisms in Section \ref{sec: comparison} and Section \ref{sec: simulation}, respectively. We conclude the paper in Section \ref{sec: conclusion}.

\section{Model} \label{sec: model}
In this paper, we focus on a stylized yet important scenario of two-member cooperation (see also Fig. \ref{model ill}).\footnote{The analysis of the two-member scenario is complex and challenging, as we will demonstrate in this paper. Nonetheless, the insights gained from this analysis will serve as a crucial foundation for understanding more general multi-member scenarios.} This scenario is typical in education and academic contexts, when two members work together on course projects, patent applications, or paper publication tasks.

In Section \ref{subsec: member model}, we introduce members' decisions and payoffs in a decentralized IEWV problem. We formulate the two-stage game in Section \ref{subsec: 2stage} and discuss three incentive allocation mechanisms in Section \ref{subsec: alloc}.

\subsection{Members' Decisions and Payoffs} \label{subsec: member model}
In this subsection, we first define the task and members. Then, we introduce each member's strategies and payoff.

\subsubsection{Task and Members} 
Consider two members $\mathcal{M}=\{H,L\}$ who work on a binary-solution problem,
e.g., two students need to report a team solution to an art appraisal project about whether the painting is \textit{Real} or \textit{Fake}.\footnote{This painting judgment problem is unverifiable due to insufficient evidence of authorship (e.g., \cite{davinci}).}
Define:
\begin{itemize}
	\item $\mathcal{X}=\{1,-1\}$: the solution space, where $1$ represents \textit{Real} and $-1$ represents \textit{Fake}.
	\item $x \in \mathcal{X}$: the true underlying solution (ground truth) unknown to the team members.
	\item $x_m \in \mathcal{X}$: member $m$'s reported solution.
	\item $x_T \in \mathcal{X}$: the aggregated team solution of this task.
\end{itemize}
Each member needs to identify the ground truth $x$ and then report her solution $x_m$ to this task.\footnote{We consider truthful reporting, as we have proved that all the incentive allocation mechanisms in this paper satisfy the truthful property defined in \cite{kong2019information}. We provide the proof in Section \uppercase\expandafter{\romannumeral8} of the online appendix \cite{online}).}
Then, the reported solutions are aggregated into a team solution $x_T$ based on the widely applied \textit{majority rule} as follows \cite{huang2021strategic}.
\begin{equation} \label{eq: mv}
	x_T = \begin{cases}
		1, \qquad & \text{if } x_H + x_L > 0,\\
		-1, & \text{if } x_H + x_L < 0,\\
		1 \text{ or } -1 \text{ with prob. }0.5, &\text{if } x_H + x_L = 0.
	\end{cases}
\end{equation}

Members benefit from generating a high-accuracy team solution, since the final score of a course project positively correlates with the team solution accuracy, i.e., $\Pr(x_T=x)$. 
We use $V_m$ to denote member $m$'s valuation on the team solution accuracy. Without loss of generality, we consider that member $H$ has a high valuation $V_H$ and member $L$ has a low valuation $V_L$, and $V_H > V_L > 0$.

\subsubsection{Member's Incentive Contribution Strategy}
In a decentralized scenario, no central entity provides incentives to members. Without proper incentives, members may be unwilling to exert effort, resulting in a low-accuracy team solution. To address this issue, we consider a setting where each member can contribute incentives to the team to encourage effort exertion. Define $D>0$ as the incentive volume. Let $d_m\in \mathcal{D}_m=\{0,D\}$ denote member $m$'s incentive contribution strategy, where $d_m=0$ means no incentive contribution, and $d_m = D$ means contribution with a fixed volume $D$.
\footnote{Binary-contribution setting is widely adopted in theoretical analysis (e.g., \cite{Chen_2021}), since one can always find an equilibrium in a finite game. We plan to study the more general incentive contribution setting in future work.} We use a vector $\bm{d}=(d_m,\ \forall m\in \mathcal{M})$ to denote members' incentive contribution profile.
The incentives will be gathered in an incentive pool and reallocated to members.

Our model is flexible enough to accommodate different forms of incentives. In the context of a course project, the total amount of points (that depend on the accuracy of the team solution) will be distributed among the team members (hence serving as the incentives). Similarly, in a paper publication (or a patent application), the ranking of the authorship (or inventors) serves as the indication of incentive. 

\subsubsection{Member's Effort Exertion Strategy} \label{subsubsec: eff}
Each member can decide whether to exert effort $e_m \in \mathcal{E}_m=\{0,1\}$ \cite{liu2017sequential} to finish the task, where $e_m = 0$ means that she puts in no effort, and $e_m = 1$ means that she exerts effort with a cost $c$ (e.g., consumption of time, energy, or computational resources). Exerting effort improves the accuracy of a member's solution. Let $q_m (e_m)$ denote member $m$'s solution accuracy, i.e., the probability that the reported solution is correct (i.e., $x_m=x$). More specifically,
\begin{equation} \label{ability}
	q_m(e_m)=
	\begin{cases}
		0.5,\ & \text{if }e_m=0 \text{, with no cost},\\
		a\in(0.5,1], & \text{if }e_m =1 \text{, with a cost }c > 0,
	\end{cases}
\end{equation}
where $a$ is member $m$'s probability of generating a correct solution if she puts in effort.\footnote{We assume that members have homogeneous effort cost and homogeneous ability to generate a correct solution with effort exertion. Although the setting is relatively simple, analyzing members' behaviors is challenging due to members' different valuations and coupled decisions.} 

We further define $\boldsymbol{e} = (e_m,\ \forall m \in \mathcal{M})$, and $P_T(\boldsymbol{e}) = \Pr(x_T=x)$, i.e., the probability that the team solution is consistent with the ground truth. 

\begin{lem}[Team Solution Accuracy] \label{lem: team acc}
	The probability that the team solution is consistent with the ground truth is
	\begin{equation} \label{eq: team acc}
		P_T(\boldsymbol{e}) = \begin{cases}
			\frac{1}{2}, & \text{\textup{if} }e_H+e_L=0,\\
			\frac{2a+1}{4},\quad & \textup{if }e_H+e_L=1,\\
			a, & \textup{if }e_H+e_L=2.\\
		\end{cases}
	\end{equation}
\end{lem}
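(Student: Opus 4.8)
The plan is to compute $P_T(\boldsymbol{e}) = \Pr(x_T = x)$ directly by conditioning on the ground truth $x$ and then enumerating the possible realizations of the reported solutions $(x_H, x_L)$ according to the majority rule in \eqref{eq: mv}. By symmetry of the solution accuracy model in \eqref{ability} — each member is correct with probability $q_m(e_m)$ independently of the value of $x$ — it suffices to fix $x = 1$ (say) and compute $\Pr(x_T = 1 \mid x = 1)$; the answer for $x = -1$ is identical, so conditioning away. I would also note that members' solutions are independent given their effort levels, which is implicit in the model.

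I would then split into the three cases according to $e_H + e_L$. \textbf{Case $e_H + e_L = 0$:} both members report correctly with probability $1/2$, so $(x_H, x_L)$ is uniform on $\{1,-1\}^2$; the majority is $1$ in the case $(1,1)$, is $-1$ in the case $(-1,-1)$, and is a fair coin flip in the two tie cases $(1,-1)$ and $(-1,1)$. Tallying gives $\Pr(x_T = 1) = \tfrac14 \cdot 1 + \tfrac14 \cdot 0 + \tfrac14 \cdot \tfrac12 + \tfrac14 \cdot \tfrac12 = \tfrac12$. \textbf{Case $e_H + e_L = 2$:} both members are correct independently with probability $a$, so $x_T = x$ exactly when at least one is correct together with the tie-breaking on the split; more cleanly, by symmetry $\Pr(x_T = 1) = a^2 + 2a(1-a)\cdot\tfrac12 = a$. \textbf{Case $e_H + e_L = 1$:} without loss of generality the effort-exerting member is correct with probability $a$ and the other with probability $\tfrac12$; enumerating the four joint outcomes with probabilities $\tfrac{a}{2}, \tfrac{a}{2}, \tfrac{1-a}{2}, \tfrac{1-a}{2}$ and applying the tie-break gives $\Pr(x_T = 1) = \tfrac{a}{2}\cdot 1 + \tfrac{a}{2}\cdot\tfrac12 + \tfrac{1-a}{2}\cdot\tfrac12 + \tfrac{1-a}{2}\cdot 0 = \tfrac{a}{2} + \tfrac14 = \tfrac{2a+1}{4}$.

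Finally I would remark that the answer depends on $\boldsymbol{e}$ only through the sum $e_H + e_L$, which is exactly why the statement is written as a function of $e_H + e_L$; in the one-effort case the identity of who exerts effort is irrelevant because both members share the same ability parameter $a$ and the same "no-effort" accuracy $1/2$. Assembling the three sub-results yields \eqref{eq: team acc}.

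The computations here are entirely routine; there is no real obstacle. The only point that needs a moment of care is the tie-breaking clause in \eqref{eq: mv}: one must remember to weight the split outcomes $(1,-1)$ and $(-1,1)$ by the factor $\tfrac12$ for the coin flip, rather than assigning them deterministically, and one must be careful in the single-effort case to keep track of which of the two tie configurations pairs the "likely-correct" report with the "likely-wrong" one. With the bookkeeping done correctly each case collapses to the claimed value.
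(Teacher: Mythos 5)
Your proof is correct and follows essentially the same route as the paper's: enumerate the possible reporting profiles given the effort levels, apply the majority rule \eqref{eq: mv} with the fair tie-break, and tally the probabilities in each of the three cases of $e_H+e_L$. The case computations ($\tfrac12$, $\tfrac{2a+1}{4}$, and $a$) all check out, including the careful $\tfrac12$ weighting of the tie configurations.
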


\begin{proof}[Proof sketch]
	We first calculate the probabilities of all possible reporting results.
	Next, for each reporting profile, we calculate $P_T(\bm{e})$ based on \eqref{eq: mv}.
\end{proof}

We provide the proof for Lemma \ref{lem: team acc} in Section \uppercase\expandafter{\romannumeral1} of the online appendix \cite{online}. The intuition behind Lemma \ref{lem: team acc} is that more members exerting effort improves the team solution accuracy. Therefore, we need to design proper incentives to motivate members' effort exertion.

\subsubsection{Incentive Allocation}
To encourage effort exertion, one needs a proper incentive allocation mechanism.
Let the mapping $\mathcal{A}:\{e_m\}_{m\in \mathcal{M}} \mapsto \{p_m^\text{z}(\boldsymbol{e})\}_{m\in \mathcal{M}}$ denote a general allocation mechanism z (which we will explain in Section \ref{subsec: alloc}), where $p_m^\text{z}(\boldsymbol{e}) \in [0,1]$ is the expected proportion of the total incentives that member $m$ can get from the incentive pool.

\subsubsection{Member's Payoff}
We define member $m$'s payoff as

\begin{align} \label{payoff}
	U_m(\boldsymbol{d,e}) = V_m \cdot P_T(\boldsymbol{e}) + p_m^\text{z}(\boldsymbol{e}) \cdot \sum_{n \in \mathcal{M}} d_n - d_m - c \cdot e_m.
\end{align}

The term $V_m \cdot P_T(\boldsymbol{e})$ captures the utility of member $m$, which is positively correlated with the team solution accuracy and depends on both members' effort exertion decisions. The term $p_m^\text{z}(\boldsymbol{e}) \cdot \sum_{n \in \mathcal{M}} d_n$ captures the expected incentive that member $m$ can get from the incentive pool under the allocation mechanism z, $d_m$ captures member $m$'s contributed incentives, and $c \cdot e_m$ captures the cost for effort exertion.

\subsection{Two-Stage Game Formulation} \label{subsec: 2stage}
In this subsection, we introduce the two-stage interactions among members. With decentralized information elicitation, members can improve the team performance by contributing incentives in Stage \uppercase\expandafter{\romannumeral1}, since the contribution will motivate members' effort exertion in Stage \uppercase\expandafter{\romannumeral2} under a proper incentive allocation.

\subsubsection{Incentive Contribution in Stage \uppercase\expandafter{\romannumeral1}}
As each member's incentive contribution strategy affects every member's payoff, members play a game defined as follows. 

\begin{game}[Incentive Contribution Game in Stage \uppercase\expandafter{\romannumeral1}] \label{game: incentive}
	The incentive contribution game is a tuple $\Omega = (\mathcal{M,D},\boldsymbol{U})$ defined by:
	\begin{itemize}
		\item Players: The set $\mathcal{M}$ of members.
		\item Strategies: 
		Each member $m$ chooses an incentive contribution strategy $d_m \in \mathcal{D}_m$. The feasible set of all strategy profiles is $\mathcal{D} = \prod_{m\in \mathcal{M}} \mathcal{D}_m$.
		\item Payoffs: The vector $\boldsymbol{U} = (U_m,\ \forall m \in \mathcal{M})$ contains each member's payoff as defined in \eqref{payoff}.
	\end{itemize}
\end{game}

In Game \ref{game: incentive}, given the other member's decision $d_{-m}$, member $m$ solves the following problem to find her best response.

\begin{prob}[Member $m$'s Best Response Problem in Stage \uppercase\expandafter{\romannumeral1}] \label{pr: incentive}
	\begin{equation} \label{br: incentive}
		\begin{aligned}
			\max && U_m(d_m,d_{-m})\\
			\text{s.t.} && d_m \in \mathcal{D}_m.
		\end{aligned}
	\end{equation}
\end{prob}

\subsubsection{Effort Exertion in Stage \uppercase\expandafter{\romannumeral2}}
As a member's effort exertion affects the team solution accuracy and hence the other member's payoff, members play an effort exertion game defined below. 

\begin{game}[Effort Exertion Game in Stage \uppercase\expandafter{\romannumeral2}] \label{game: effort}
	The effort exertion game is a tuple $\Gamma = (\mathcal{M,E},\boldsymbol{U})$ defined by:
	\begin{itemize}
		\item Players: The set $\mathcal{M}$ of members.
		\item Strategies: 
		Each member $m$ chooses an effort exertion strategy $e_m \in \mathcal{E}_m$. 
		The feasible set of all strategy profiles is $\mathcal{E} = \prod_{m\in \mathcal{M}} \mathcal{E}_m$.
		\item Payoffs: The vector $\boldsymbol{U} = (U_m,\ \forall m \in \mathcal{M})$ contains each member's payoff as defined in \eqref{payoff}.
	\end{itemize}
\end{game}

In Game \ref{game: effort}, given the other member's decision $e_{-m}$, member $m$ solves the following problem to find her best response.

\begin{prob}[Member $m$'s Best Response Problem] \label{pr: effort}
	\begin{equation} \label{br: effort}
		\begin{aligned}
			\max && U_m(e_m(\bm{d}),e_{-m}(\bm{d}))\\
			\text{s.t.} && e_m(\bm{d}) \in \mathcal{E}_m.
		\end{aligned}
	\end{equation}
\end{prob}

We aim to derive the Nash equilibrium as defined below. We use $\bm{s}=(s_m,\ \forall m \in \mathcal{M})$ to denote a generic strategy profile and $\mathcal{S}_m$ to denote a generic strategy space of member $m$ in Game \ref{game: incentive} and Game \ref{game: effort}.

\begin{defin}[Nash Equilibrium] \label{def: PNE}
	A strategy profile $\boldsymbol{s}^* = (s_m^*,s_{-m}^*)$ constitutes a Nash Equilibrium (NE), if for each $m \in \mathcal{M}$ and each $s_m' \in \mathcal{S}_m$, 
	\begin{align}
		U_m(s_m^*,s_{-m}^*) \geq U_m(s_m',s_{-m}^*),
	\end{align}	
	where $s_{-m}^*$ is the equilibrium strategy profile of the other member.
\end{defin}

Intuitively, an NE is a stable strategy profile where no member can increase her payoff by unilaterally changing her strategy. 

We list the key notations in Table \ref{tbl: notation} for ease of reading.

\begin{table}[t]
	\centering  
	\caption{Key Notations.}  
	\label{tbl: notation}  
	\begin{tabular}{|c|c|}  
		\hline
		\multicolumn{2}{|c|}{\textbf{Variables}}\\
		\hline
		$d_m$ & Member $m$'s incentive contribution strategy\\ \hline
		$e_m$ & Member $m$'s effort exertion strategy\\ \hline
		$x_m$ & Member $m$'s reported solution\\ \hline
		$x_T$ & Aggregated team solution\\ \hline
		\multicolumn{2}{|c|}{\textbf{Parameters}}\\ \hline
		$\mathcal{M}$ & Set of members\\ \hline 
		$x$ & The underlying ground truth\\ \hline
		$V_H(V_L)$ & Member $H$ ($L$)'s valuation on team solution accuracy\\ \hline
		$D$ & Incentive volume\\ \hline
		$c$ & Effort exertion cost\\ \hline
		$a$ & Solution accuracy with effort exertion\\ \hline
		\multicolumn{2}{|c|}{\textbf{Functions}}\\ \hline
		$q_m(e_m)$ & The accuracy of member $m$'s solution\\ \hline
		$P_T(\bm{e})$ & The accuracy of aggregated team solution\\ \hline
		\multirow{2}{*}{$p_m^\text{z}(\boldsymbol{e})$} & Member $m$'s expected proportion from the incentive pool \\
		& under incentive allocation mechanism z\\ \hline
		$U_m(\boldsymbol{d,e})$ & Member $m$'s expected payoff\\ \hline
	\end{tabular}
	\vspace{-.5cm}
\end{table}

\subsection{Incentive Allocation Mechanisms} \label{subsec: alloc}

The equilibrium crucially depends on the incentive allocation mechanism. 
In this paper, we consider three incentive allocation mechanisms. The three mechanisms differ in the proportion of total incentives that each member can obtain from the incentive pool, and we use $p_m^\text{z}(\bm{e})$ to denote the proportion allocated to member $m$ under mechanism z.

\subsubsection{Equal Allocation (EA)}
EA is a benchmark mechanism where members equally split the incentive pool, i.e., 
\begin{equation} \label{eq: ea}
	p_m^{\text{EA}}(\bm{e}) = \frac{1}{|\mathcal{M}|},\ \forall m\in \mathcal{M}.
\end{equation}

\subsubsection{Output Agreement (OA)}
Output agreement \cite{liu2016learning} is a popular method in IEWV problems, where a member receives incentives if her reported solution is consistent with the majority. In this decentralized crowd decision setting, members split the pooled incentives if their reported solutions are consistent. Otherwise, they will not get any incentives.
The allocation under OA is shown in Lemma \ref{lem: oa}.
\begin{lem}[Allocated Proportion under OA] \label{lem: oa}
	The expected proportion of the total incentives that member $m$ can obtain from the incentive pool under OA is
	\begin{equation} \label{eq: oa}
		p_m^\text{OA}(\bm{e}) = \begin{cases}
			\frac{2a^2 -2a + 1}{2},\quad &\textup{if } e_H + e_L = 2,\\
			\frac{1}{4},& \textup{otherwise.}
		\end{cases}
	\end{equation}
\end{lem}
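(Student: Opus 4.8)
The plan is to reduce the OA proportion to a single agreement probability and then evaluate that probability for each effort profile. First I would invoke the definition of OA in the two-member setting: member $m$ receives a share of the pool only when the two reported solutions coincide ($x_H=x_L$), and in that event the pool is split evenly between the two members. Hence, irrespective of which member (if any) exerts effort,
\begin{equation*}
	p_m^{\text{OA}}(\bm{e}) = \tfrac{1}{2}\Pr(x_H=x_L),\qquad \forall m\in\mathcal{M},
\end{equation*}
which already explains why the allocated proportion in \eqref{eq: oa} is the same for both members.

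Next I would compute $\Pr(x_H=x_L)$. Conditioning on the ground truth $x$, the two reports are independent under truthful reporting, and member $m$ reports correctly with probability $q_m(e_m)$ as in \eqref{ability}. The two reports therefore agree exactly when both are correct or both are wrong, so
\begin{equation*}
	\Pr(x_H=x_L) = q_H(e_H)\,q_L(e_L) + \bigl(1-q_H(e_H)\bigr)\bigl(1-q_L(e_L)\bigr).
\end{equation*}
Equivalently, one can read these agreement probabilities off the reporting-profile probabilities already tabulated in the proof of Lemma \ref{lem: team acc}, avoiding any new computation.

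Then I would substitute the three effort profiles. If $e_H+e_L=2$, both accuracies equal $a$, so $\Pr(x_H=x_L)=a^2+(1-a)^2=2a^2-2a+1$ and $p_m^{\text{OA}}=\frac{2a^2-2a+1}{2}$. If $e_H+e_L=1$, one accuracy is $a$ and the other is $\frac{1}{2}$, giving $\Pr(x_H=x_L)=\frac{1}{2}a+\frac{1}{2}(1-a)=\frac{1}{2}$; and if $e_H+e_L=0$, both accuracies are $\frac{1}{2}$, giving $\Pr(x_H=x_L)=\frac{1}{4}+\frac{1}{4}=\frac{1}{2}$. In both of the latter cases $p_m^{\text{OA}}=\frac{1}{4}$, which establishes \eqref{eq: oa}.

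There is no serious obstacle here; the only step warranting care is justifying the reduction $p_m^{\text{OA}}(\bm{e})=\tfrac{1}{2}\Pr(x_H=x_L)$ — namely, that ``agreement with the majority'' in the two-member case amounts to the two reports coinciding, that the tie case in \eqref{eq: mv} is irrelevant to who is paid under OA, and that conditional independence of the reports given $x$ is what licenses multiplying the per-member accuracies. Once that reduction is pinned down, the remainder is the routine case enumeration above.
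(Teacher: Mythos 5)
Your proposal is correct and follows essentially the same route as the paper's proof: compute the probability that the two reported solutions coincide (using conditional independence given the ground truth and the accuracies $q_m(e_m)$ from \eqref{ability}), then split the pool evenly among the agreeing members, yielding $p_m^{\text{OA}}(\bm{e})=\tfrac{1}{2}\Pr(x_H=x_L)$ and the stated values $\tfrac{2a^2-2a+1}{2}$ and $\tfrac{1}{4}$. The case checks are all accurate, so nothing further is needed.
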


\begin{proof}[Proof sketch]
	We derive the expected proportion $p_m^{\text{OA}}(\bm{e})$ by first calculating the probability that one's reported solution matches the other's and then distributing the incentive pool evenly to the members with consistent solutions. 
\end{proof}

We provide the proof for Lemma \ref{lem: oa} in Section \uppercase\expandafter{\romannumeral2} of the online appendix \cite{online}. An essential implication of Lemma \ref{lem: oa} is that the expected proportion of the incentive amount can only be improved by both members' effort exertion. 
If only one member puts in the effort, the probability of two reported solutions matching is $1/2$ since the other member's report is random. Therefore, if at most one member exerts effort, they can only expect to receive a quarter of the incentive amount.

\subsubsection{Shapley Value (SV)}
The Shapley value is a solution concept in cooperative game theory \cite{shapley201617}, and it is widely applied in resource allocation and cost-sharing (e.g., \cite{an2019resource} \cite{oikonomakou2017fairness}). In our setting, a member's Shapley value $\phi_m(\bm{e})$ is defined as her average marginal contribution to the team solution accuracy, i.e., 
\begin{equation} \label{eq: sv}
	\begin{split}
		\phi_m(\boldsymbol{e}) = 
		\sum_{\mathcal{C}\subseteq \mathcal{M}\setminus\{m\}} \frac{ 1}{2}[p_T(\{e_n\}_{n\in \mathcal{C}\cup \{m\}}) - p_T(\{e_n\}_{n\in \mathcal{C}})].
	\end{split}
\end{equation}

Then, the proportion $p_m^{\text{SV}}(\bm{e})$ that member $m$ can get from incentive pool is
\begin{equation} \label{sv}
	p_m^{\text{SV}}(\bm{e}) = \frac{\phi_m(\boldsymbol{e})}{\sum_{n\in \mathcal{M}}\phi_n(\boldsymbol{e})}.
\end{equation}

\subsubsection{Mechanism Properties} \label{subsec: prop}
To facilitate a better understanding of the three mechanisms, we first define two properties in mechanism design and present the results in Table \ref{tbl: model summary}.

\begin{itemize}
	\item \emph{Individually Rational (IR):} An incentive allocation mechanism is IR if each member receives a non-negative payoff, i.e., $U_m(\bm{e},\bm{d}) \geq 0,\ \forall m\in \mathcal{M}.$
	\item \emph{Budget Balanced (BB)}: An incentive allocation mechanism is BB if the sum of each member's allocated incentives equals the amount of incentive contributions, i.e., $\sum_{m\in\mathcal{M}}p_m(\bm{e})\cdot \sum_{n \in \mathcal{M}} d_n = \sum_{m\in\mathcal{M}}d_m.$ Similarly, an incentive allocation mechanism is \textit{weakly Budget Balanced (weakly BB)} if $\sum_{m\in\mathcal{M}}p_m(\bm{e})\cdot \sum_{n \in \mathcal{M}} d_n \leq \sum_{m\in\mathcal{M}}d_m.$
\end{itemize}

\begin{table}[t]
	\centering  
	\caption{Comparison in terms of Mechanism Properties.}  
	\label{tbl: model summary}  
	\begin{tabular}{|c|c|c|}  
		\hline
		\textbf{Incentive}& \textbf{Individual} & \textbf{Budget} \\
		\textbf{Allocation Mechanism} & \textbf{Rational (IR)} &\textbf{Balanced (BB)} \\
		\hline \textbf{Equal Allocation (EA)} & $\checkmark$ & $\checkmark$ \\	
		\hline \textbf{Output Agreement (OA)} & $\checkmark$ & Weakly \\ 		
		\hline \textbf{Shapley Value (SV)} & $\checkmark$ & $\checkmark$ \\
		\hline 
	\end{tabular}
\end{table}

We discuss the intuition of Table \ref{tbl: model summary} as follows.
\begin{itemize}
	\item All three mechanisms satisfy the IR property, since each member has the option to exert no effort and contribute no incentive, ensuring a non-negative payoff.
	\item Both EA and SV mechanisms are BB since the incentive pool is fully allocated to both members. Under OA, however, if members' reported solutions are inconsistent, they will receive nothing from the pool. Hence the OA mechanism is weakly BB. 
\end{itemize}

We discuss the social welfare under each mechanism in Section \ref{sec: comparison} and \ref{sec: simulation}. In the following, we derive the Nash equilibria under these three mechanisms and compare their mechanism performances.

\section{Equilibrium Analysis} \label{sec: NE}
In this section, we analyze members' equilibria under EA, OA, and SV, respectively. 
For ease of presentation, we will use a tuple $(\bm{d}^*=(d_H^*,d_L^*), \bm{e}^*=(e_H^*,e_L^*))$ to characterize the two-stage NE. 

In this paper, we focus on the case where $V_H/V_L >  \max\{\frac{14a-5}{6a-1},\ \frac{2a+1}{3-2a}\}$, i.e., members hold diverse valuations toward the team solution. For example, in a course project, a student values the team solution accuracy more since a high score is helpful for her postgraduate application. In contrast, the other student only needs a 'pass' and would rather spend more time on other issues (such as an industry internship).

For the completeness of our results, we have also analyzed the case where $1 < V_H/V_L\leq\max\{\frac{14a-5}{6a-1},\ \frac{2a+1}{3-2a}\}$. Interested readers can refer to online appendix \cite{online} for details.

\subsection{Equilibrium under Equal Allocation} \label{EA}
We solve members' equilibrium strategies under EA in Theorem \ref{thm: ea}.

\begin{thm}[Equilibrium under EA] \label{thm: ea}
	Define $c_L\triangleq \frac{2a-1}{4}V_L$ and $c_H\triangleq \frac{2a-1}{4}V_H$. Then
	\begin{equation}
		(\boldsymbol{d}^\ast, \boldsymbol{e}^*) =
		\begin{cases}
			((0,0),(1,1)), &\textup{ if } c\in (0,c_L],\ D\in (0,\infty),\\
			((0,0),(1,0)), &\textup{ if } c\in (c_L,c_H],\ D\in (0,\infty),\\
			((0,0),(0,0)), &\textup{ if } c\in (c_H,\infty),\ D\in (0,\infty).
		\end{cases}
	\end{equation}
\end{thm}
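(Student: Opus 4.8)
The plan is to work backward through the two-stage game, solving Stage II (the effort exertion game $\Gamma$) first for every incentive contribution profile $\bm{d}$, and then using these subgame outcomes to solve Stage I (the incentive contribution game $\Omega$). Under EA, the key observation is that $p_m^{\text{EA}}(\bm{e}) = 1/2$ is constant in $\bm{e}$, so the incentive term $p_m^{\text{EA}}(\bm{e})\cdot\sum_n d_n = \tfrac12\sum_n d_n$ in \eqref{payoff} does not depend on the effort profile at all. Consequently, in Stage II the incentives are a sunk constant and each member $m$ simply compares $V_m\cdot P_T(\bm{e})$ net of her own effort cost $c\cdot e_m$. First I would compute member $m$'s marginal gain from switching $e_m$ from $0$ to $1$: using Lemma \ref{lem: team acc}, the jump in $P_T$ when the other member exerts no effort is $\frac{2a+1}{4}-\frac12 = \frac{2a-1}{4}$, and the jump when the other member does exert effort is $a - \frac{2a+1}{4} = \frac{2a-1}{4}$ — the two marginal contributions are equal. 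Hence member $m$ wants to exert effort if and only if $\frac{2a-1}{4}V_m \ge c$, regardless of the other member's action; i.e., effort is a dominant-strategy decision in Stage II, with threshold $c_m \triangleq \frac{2a-1}{4}V_m$.

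Next I would assemble the unique Stage II equilibrium as a function of $c$: since $V_H > V_L$ gives $c_H > c_L$, we get $\bm{e}^* = (1,1)$ when $c \le c_L$, $\bm{e}^* = (1,0)$ when $c_L < c \le c_H$, and $\bm{e}^* = (0,0)$ when $c > c_H$ — and crucially this holds for every $\bm{d}$, because the incentive pool is irrelevant to Stage II under EA. Then I would turn to Stage I: member $m$'s payoff contains $-d_m + \tfrac12\sum_n d_n = -\tfrac12 d_m + \tfrac12 d_{-m}$ as the only $d$-dependent part (the $V_m P_T$ and $c e_m$ terms are now pinned down by the Stage II analysis and do not depend on $\bm{d}$). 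So contributing $d_m = D$ strictly decreases member $m$'s payoff by $D/2$ relative to $d_m = 0$, making $d_m = 0$ a dominant strategy for both members. This yields $\bm{d}^* = (0,0)$ for all parameter values, and combining with the three cases for $\bm{e}^*$ gives exactly the statement of Theorem \ref{thm: ea}. I would also note that the boundary cases (ties at $c = c_L$ or $c = c_H$) are resolved by the stated convention of exerting effort when indifferent, which I would state explicitly or handle by the usual tie-breaking remark.

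The bookkeeping is essentially routine here; the one subtlety worth flagging is that the equilibrium claim requires the Stage II subgame to have a \emph{unique} equilibrium for \emph{every} $\bm{d}$ (not just the on-path $\bm{d}^* = (0,0)$), so that the backward-induction reduction of Stage I payoffs is well defined — and this is exactly what the dominant-strategy structure in Stage II delivers, since a strategy profile in dominant strategies is the unique Nash equilibrium of $\Gamma$. The main (very mild) obstacle is therefore just to make the dominant-strategy argument airtight in both stages and to verify the arithmetic that the two marginal contributions of effort to $P_T$ coincide, which is what makes the Stage II threshold independent of the opponent's action; everything else follows by direct substitution into \eqref{payoff}.
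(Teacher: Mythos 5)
Your proposal is correct and follows essentially the same backward-induction logic as the paper: under EA the allocated proportion is constant in $\bm{e}$, so effort in Stage II reduces to the dominant-strategy threshold comparison $c \lessgtr \frac{2a-1}{4}V_m$ independently of $\bm{d}$, and contributing in Stage I is then strictly dominated, yielding $\bm{d}^*=(0,0)$ with the three cost regimes for $\bm{e}^*$. Your explicit observation that the two marginal contributions of effort to $P_T$ coincide at $\frac{2a-1}{4}$, together with the tie-breaking remark at $c=c_L,c_H$, makes the argument complete.
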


\begin{figure}[t]
	\centerline{\includegraphics[width=.95\linewidth]{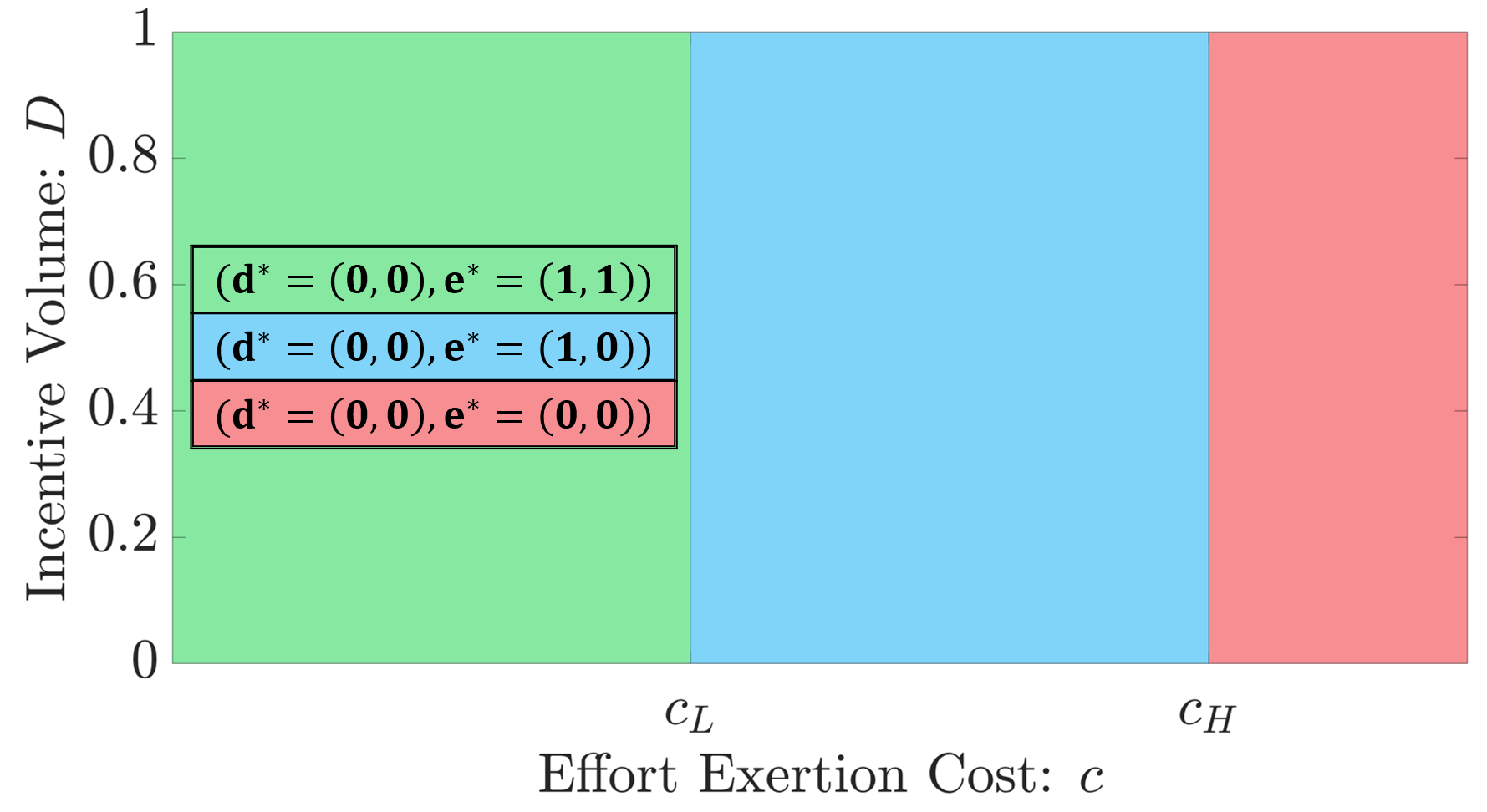}}
	\caption{Illustration of the NE under EA: The horizontal axis represents the effort exertion cost $c$ and the vertical axis represents the incentive contribution volume $D$. Regions of different colors represent different equilibrium profiles, and the legend shows the incentive contribution and effort exertion strategies at equilibrium. We illustrate the $(c, D)$ condition for the existence of each equilibrium profile.}
	\label{fig: ea}
\end{figure}

We provide the proof in Section \uppercase\expandafter{\romannumeral3} of the online appendix \cite{online}. An illustration of NE under EA is shown in Fig. \ref{fig: ea}. According to Theorem \ref{thm: ea}, we find that under all possible equilibrium profiles, neither member contributes any incentives, i.e., $\bm{d}^*=(0,0)$. Under EA, each member always wants to free-ride on the other one's  incentive contribution because she will always get half of the incentive pool regardless of her effort exertion strategy. This implies that EA fails to incentivize members to exert effort.

Furthermore, their effort exertion only depends on the effort cost, as shown in Fig. \ref{fig: ea}. Specifically, when the effort cost is low (\textit{the green part}), both members will exert effort to improve the team solution accuracy. When the cost is moderate (\textit{the blue part}), only member $H$ will exert effort, since she values the team solution accuracy more than member $L$. When the cost is high (\textit{the red part}), both of them will randomly report since the benefit from the improvement of team solution accuracy cannot compensate for the cost of effort exertion.

In summary, neither member will contribute incentives under EA to encourage members' effort exertion. To address this issue, we resort to the widely adopted mechanism in IEWV, i.e., Output Agreement (OA) (see \eqref{eq: oa}). We show the equilibrium results under OA in subsection \ref{subsec: oa}.

\subsection{Equilibrium under Output Agreement} \label{subsec: oa}

Output agreement scores members' solutions based on the solution similarity \cite{liu2016learning}. OA can motivate effort exertion since a member will get a larger reward from the incentive pool if she puts in effort (given the other one puts in effort as well) (see Lemma \ref{lem: oa}). We solve members' equilibrium strategies under OA in Theorem \ref{thm: oa}.

\begin{thm}[Equilibrium under OA] \label{thm: oa}
	Depending on the values of cost $c$ and incentive $D$, there exist four possible equilibria under OA, which are illustrated in Fig. \ref{fig: oa}.
\end{thm}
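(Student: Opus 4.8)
The plan is to prove Theorem~\ref{thm: oa} by backward induction on the two-stage game, exactly as in the proof of Theorem~\ref{thm: ea}, the only difference being that under OA the size of the incentive pool reenters the effort subgame, so the equilibrium now depends on \emph{both} $c$ and $D$. First I would observe that $p_m^{\text{OA}}(\bm e)$ depends only on $\bm e$ and that the contributions enter \eqref{payoff}, apart from the sunk term $-d_m$, only through the pooled total $T=d_H+d_L\in\{0,D,2D\}$; hence the Stage~\uppercase\expandafter{\romannumeral2} game $\Gamma$ needs to be solved only for these three values of $T$.

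For the effort subgame, using Lemmas~\ref{lem: team acc} and~\ref{lem: oa}, member $m$'s net gain from exerting effort is $\tfrac{2a-1}{4}V_m-c$ when the other member shirks and $\tfrac{2a-1}{4}V_m+\tfrac{(2a-1)^2}{4}T-c$ when the other member also exerts effort. Thus $m$'s best response is governed by the two thresholds $c_m\triangleq\tfrac{2a-1}{4}V_m$ (already introduced in Theorem~\ref{thm: ea}) and $\hat c_m(T)\triangleq c_m+\tfrac{(2a-1)^2}{4}T$. From $V_H>V_L$ one gets the orderings $c_L\le c_H$, $c_m\le\hat c_m(T)$, and $\hat c_L(T)\le\hat c_H(T)$, from which it follows that $(0,1)$ is never an NE of $\Gamma$ and that the Stage~\uppercase\expandafter{\romannumeral2} NE is $(1,1)$ for small $c$, $(1,0)$ for intermediate $c$, and $(0,0)$ for large $c$; the only possible multiplicity is a coexistence of $(1,1)$ and $(0,0)$, which arises exactly when $\hat c_L(T)>c_H$, i.e.\ when the pool is large. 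I would then show that the standing assumption $V_H/V_L>\max\{\tfrac{14a-5}{6a-1},\tfrac{2a+1}{3-2a}\}$ keeps every such ambiguous subgame off the equilibrium path, so the subgame outcomes that matter are single-valued.

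Substituting $\bm e^*(\bm d)$ back into \eqref{payoff} gives the reduced Stage~\uppercase\expandafter{\romannumeral1} payoffs $U_m(\bm d,\bm e^*(\bm d))$ for $\bm d\in\{(0,0),(D,0),(0,D),(D,D)\}$, after which I would check the no-deviation condition of Definition~\ref{def: PNE} profile by profile. I expect the following: (i) $(D,D)$ is never an NE, because when both exert effort OA returns each contributor only the fraction $\tfrac{2a^2-2a+1}{2}<\tfrac12$ of the pool, so a contributor strictly prefers to free-ride; (ii) a contribution by $L$ is never profitable in the chosen regime — the relevant inequalities either fail because $c>c_L$ or reduce to an \emph{upper} bound on $V_H/V_L$ excluded by the heterogeneity assumption; (iii) for small $c$ the unique NE is $((0,0),(1,1))$ and for large $c$ it is $((0,0),(0,0))$, for every $D>0$; (iv) for intermediate $c$ there is a window in the $(c,D)$ plane in which $H$ strictly gains by contributing $D$ to lift the effort subgame from $(1,0)$ to $(1,1)$, producing the fourth equilibrium $((D,0),(1,1))$, while outside this window the equilibrium is $((0,0),(1,0))$. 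The window is bounded below by the curve on which $(1,1)$ first becomes sustainable once $H$ contributes, $D=\tfrac{4(c-c_L)}{(2a-1)^2}$, and above by the curve on which contributing stops paying off for $H$, $D=\tfrac{(2a-1)V_H}{2(1+2a-2a^2)}$, and it is non-empty only up to some cost cap $\bar c\le c_H$. Collecting the resulting inequalities in $(c,D)$ yields the four regions of Fig.~\ref{fig: oa}, and a final check that they cover the parameter space and overlap only on their boundaries completes the proof.

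The step I expect to be the main obstacle is handling the Stage~\uppercase\expandafter{\romannumeral2} multiplicity cleanly: a member's payoff from a Stage~\uppercase\expandafter{\romannumeral1} deviation is ill-defined whenever the induced subgame has both $(1,1)$ and $(0,0)$ as equilibria, so one must argue — using precisely the heterogeneity inequality — that every such subgame is unreachable from any candidate equilibrium and therefore irrelevant to the deviation checks. The remainder is careful but routine bookkeeping: tracking the three values of $T$, the three cost regimes, and the induced boundary curves, and confirming that no further equilibria (for instance with $L$ contributing) exist beyond the four listed.
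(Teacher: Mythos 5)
Your skeleton is the right one and matches the paper's route: backward induction, with the Stage~\uppercase\expandafter{\romannumeral2} effort subgame solved as a function of the pooled total $T\in\{0,D,2D\}$ and the Stage~\uppercase\expandafter{\romannumeral1} deviations checked against the induced continuation. Your Stage~\uppercase\expandafter{\romannumeral2} incremental payoffs are computed correctly ($\tfrac{2a-1}{4}V_m-c$ against a shirker, $\tfrac{2a-1}{4}V_m+\tfrac{(2a-1)^2}{4}T-c$ against an exerter), the conclusion that $(0,1)$ is never a subgame NE is right, and the two boundary curves you extract, $D=\tfrac{4(c-c_L)}{(2a-1)^2}$ and $D=\tfrac{(2a-1)V_H}{2(1+2a-2a^2)}$, are exactly the thresholds $\hat{D}_\text{low}^\text{OA}$, $\hat{D}_\text{high}^\text{OA}$ delimiting the orange region in Fig.~\ref{fig: oa}.

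However, there is a genuine gap in the step you yourself flag as the main obstacle, and your proposed resolution of it does not work as stated. The subgames with coexisting NE $(1,1)$ and $(0,0)$ (i.e.\ $c_H\le c\le \hat c_L(T)$, which occurs whenever $T\ge \tfrac{V_H-V_L}{2a-1}$) are \emph{not} kept off the table by the heterogeneity assumption: the candidate equilibrium $((0,0),(0,0))$ must hold for every $D>0$ in the region $c>c_H$, and a unilateral Stage~\uppercase\expandafter{\romannumeral1} deviation to contribute a large $D$ lands precisely in such an ambiguous subgame. Off-path subgames reached by unilateral deviations are exactly what the no-deviation condition of Definition~\ref{def: PNE} (applied to the two-stage game) interrogates, so "unreachable from any candidate equilibrium, hence irrelevant" is not a valid dismissal. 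What is actually needed is either an explicit Stage~\uppercase\expandafter{\romannumeral2} selection rule, or a check that the deviation is unprofitable under the continuation most favorable to the deviator: here one must verify that whenever $(1,1)$ is sustainable in the post-deviation subgame (which forces $D\ge \tfrac{4(c-c_L)}{(2a-1)^2}$ with $c>c_H$), the deviator's gain $\tfrac{2a-1}{2}V_m-\tfrac{1+2a-2a^2}{2}D-c$ is negative; this does hold, and the heterogeneity/valuation-gap condition enters at exactly this point, but the inequality has to be carried out rather than asserted. A similar continuation-switching check (not the pure "share $<\tfrac12$ so free-ride" argument) is what rules out $(D,D)$ when dropping one's contribution changes the subgame outcome from $(1,1)$ to $(1,0)$. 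With those verifications supplied, your plan yields the four equilibria of Theorem~\ref{thm: oa}; as written, the proposal leaves them as unproven assertions.
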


\begin{figure}[t]
	\centerline{\includegraphics[width=.95\linewidth]{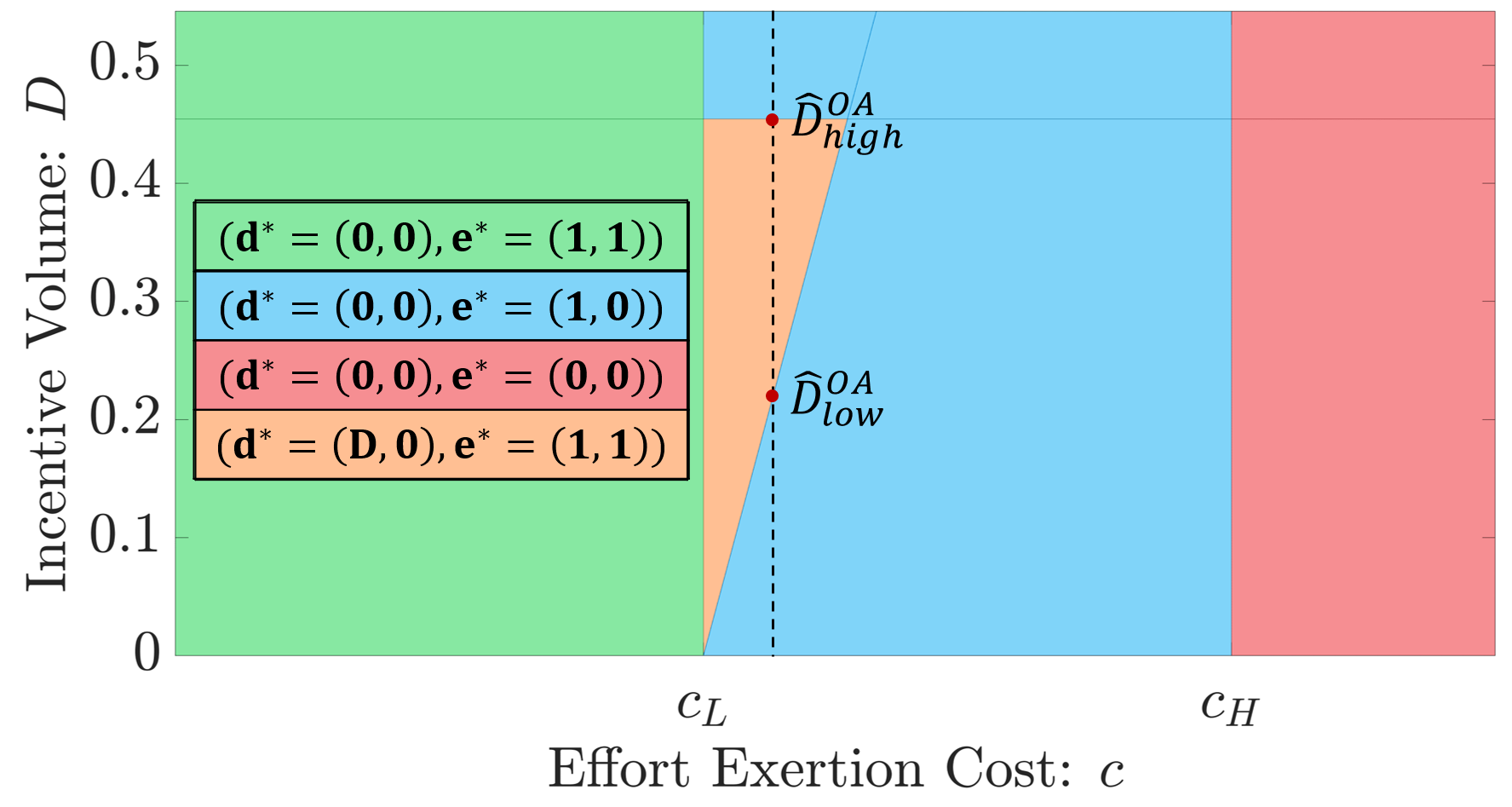}}
	\caption{Illustration of the NE under OA: We illustrate the $(c,D)$ condition for the existence of each equilibrium profile. $\hat{D}_\text{low}^\text{OA}(c)$ and $\hat{D}_\text{high}^\text{OA}(c)$ exist in the orange part and are two incentive volume thresholds for members' equilibrium strategies given an effort exertion cost $c$.}
	\label{fig: oa}
\end{figure}

One can find a detailed proof and the complete closed-form characterizations of the boundaries among various equilibria in Section \uppercase\expandafter{\romannumeral4} of the online appendix \cite{online}. Next, we discuss the intuitions behind Theorem \ref{thm: oa} using Fig. \ref{fig: oa}.

\paragraph{$(\bm{d}^* = (0,0), \bm{e}^* = (1,1))$ (the green part)}
When the effort exertion cost $c$ is lower than $c_L$, both members contribute none but exert effort. In this case, no incentives are needed as the utility of improving team solution accuracy exceeds the cost of effort. 

\paragraph{$(\bm{d}^* = (0,0), \bm{e}^* = (1,0))$ (the blue part)}
When the cost $c$ is moderate, i.e., $c_L < c \leq c_H$, there exist two cost-dependent incentive thresholds, $\hat{D}_\text{low}^\text{OA}(c)$ and $\hat{D}_\text{high}^\text{OA}(c)$. When the incentive volume $D$ is either below $\hat{D}_\text{low}^\text{OA}(c)$ or above $\hat{D}_\text{high}^\text{OA}(c)$, neither members contribute incentives, and only member $H$ exerts effort. Specifically, an incentive smaller than $\hat{D}_\text{low}^\text{OA}(c)$ cannot incentivize member $L$'s effort exertion, while an incentive larger than $\hat{D}_\text{high}^\text{OA}(c)$ hurts member $H$'s payoff. As a result, no member chooses to contribute incentives. However, member $H$ chooses to exert effort under a moderate cost as she has a higher valuation towards the team solution accuracy.

\paragraph{$(\bm{d}^* = (0,0), \bm{e}^* = (0,0))$ (the red part)}
When the cost $c$ is higher than $c_H$, neither members contribute incentives nor exert effort. With a high cost, one needs large incentives to motivate effort exertion. However, a large incentive volume will also hurt the contributor as she may not be able to get enough incentive allocation back to her. As a result, neither member contributes incentives nor exerts effort.  

\paragraph{$(\bm{d}^* = (D,0), \bm{e}^* = (1,1))$ (the orange part)}
When the cost $c$ is moderate, i.e., $c_L < c \leq c_H$ and the incentive volume $D$ is both above $\hat{D}_\text{low}^\text{OA}(c)$ and below $\hat{D}_\text{high}^\text{OA}(c)$, member $H$ is willing to provide a moderate amount of incentive to member $L$. This will motivate member $L$ to exert effort (together with member $H$), which will generate a team solution with higher accuracy.

Different from EA, OA can motivate members' effort exertion (see the orange part in Fig. \ref{fig: oa}, which does not exist under EA in Fig. \ref{fig: ea}). However, from Lemma \ref{lem: oa}, we find that under OA, a member cannot get a larger expected incentive proportion by exerting effort if the other one does not put in the effort. This property may not best incentivize effort exertion from members, which motivates us to use the Shapley value mechanism below.

\subsection{Equilibrium under Shapley Value} \label{subsec: sv}
In collaboration settings, it is common for a member to be able to observe each other's effort exertion, especially in groups with close interactions. For example, peer evaluation is widely adopted to assess the effort exertion in a project \cite{planas2021analysis}. With this information, we resort to the Shapley Value (SV) (see \eqref{sv}), which is built on a classical concept in cooperative game theory.

Members' equilibrium strategies under SV are shown in Theorem \ref{thm: sv}.

\begin{thm}[Equilibrium under SV] \label{thm: sv}
	Depending on the values of cost $c$ and incentive $D$, there exist five possible equilibria under SV, which are illustrated in Fig. \ref{fig: sv}.
\end{thm}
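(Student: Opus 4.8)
The plan is to solve the two-stage game by backward induction, exactly as for Theorems~\ref{thm: ea} and~\ref{thm: oa} but now with the Shapley-value proportions $p_m^{\text{SV}}(\bm{e})$ of \eqref{eq: sv}--\eqref{sv}. \textbf{Step 1 (the allocation).} First I would evaluate $p_H^{\text{SV}}(\bm{e})$ and $p_L^{\text{SV}}(\bm{e})$ in closed form for each of the four effort profiles $\bm{e}\in\{(0,0),(1,0),(0,1),(1,1)\}$, by inserting the team accuracies $P_T(\bm{e})$ of Lemma~\ref{lem: team acc} (and the singleton accuracies $q_m(e_m)$) into \eqref{eq: sv}; the only care needed is at the degenerate profiles where the denominator of \eqref{sv} vanishes, which is fixed by the mechanism's tie-breaking (equal split). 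This produces, for every $\bm{e}$, the coefficients that enter the payoff \eqref{payoff}.

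\textbf{Step 2 (Stage~\uppercase\expandafter{\romannumeral2}).} For each of the four contribution profiles $\bm{d}\in\{(0,0),(D,0),(0,D),(D,D)\}$, the effort game $\Gamma$ is a $2\times2$ game; I would write each member's ``exert vs.\ not'' payoff gap, given the opponent's action, as an affine function of $c$ and $D$ with coefficients in $a,V_H,V_L$, and read off all pure Nash equilibria of $\Gamma$ as a function of $(c,D)$. One can verify along the way---consistent with the general monotonicity announced in the introduction---that $e_L^*>e_H^*$ never occurs, which prunes the case list. Where $\Gamma$ has multiple pure equilibria I would apply the same equilibrium-selection convention as in Theorems~\ref{thm: ea}--\ref{thm: oa} to pin down a unique $\bm{e}^*(\bm{d})$.

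\textbf{Step 3 (Stage~\uppercase\expandafter{\romannumeral1}).} Substituting $\bm{e}^*(\bm{d})$ into \eqref{payoff} reduces each member's decision in Game~\ref{game: incentive} to a binary choice; I would check each of the four candidate $\bm{d}^*$ for the Nash property, obtaining cost-dependent incentive cutoffs (the SV analogues of $\hat{D}_\text{low}^\text{OA}(c)$ and $\hat{D}_\text{high}^\text{OA}(c)$). Assembling every inequality, I would partition the $(c,D)$ plane and check that precisely the five regions of Fig.~\ref{fig: sv} emerge. This is where the counter-intuitive $\bm{d}^*=(0,D)$ region appears: the high-valuation member already exerts effort unprompted over a wide cost range and gains nothing by funding a pool she must split, whereas the low-valuation member's contribution is exactly what tips $\Gamma$ to $\bm{e}^*=(1,1)$.

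\textbf{Main obstacle.} The difficulty is bookkeeping rather than any single hard estimate: four $\bm{d}$-profiles, each inducing a $2\times2$ game whose equilibrium set changes across several $(c,D)$-thresholds, and for each resulting $\bm{e}^*(\bm{d})$ one must re-check both members' contribution deviations. Verifying that the threshold inequalities are mutually consistent, that the five regions tile the plane without gaps or overlaps, and that the hypothesis $V_H/V_L>\max\{\frac{14a-5}{6a-1},\frac{2a+1}{3-2a}\}$ is precisely what makes the SV-specific region nonempty while excluding spurious equilibria, is the delicate part; the Shapley proportions and the within-cell payoff comparisons are routine once the setup is in place.
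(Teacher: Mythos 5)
Your overall plan---backward induction: closed-form SV proportions for each effort profile (including care at the degenerate profile $\bm e=(0,0)$ where the denominator of \eqref{sv} vanishes), then the $2\times 2$ effort game for each of the four contribution profiles, then the Stage-\uppercase\expandafter{\romannumeral1} contribution game, and finally a partition of the $(c,D)$ plane---is exactly the route the paper takes for Theorem \ref{thm: sv}, in parallel with Theorems \ref{thm: ea} and \ref{thm: oa}. One small correction of framing: the theorem characterizes \emph{all possible} equilibria rather than a selected unique one, so no equilibrium-selection convention beyond verifying the NE inequalities is needed (or used).

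The substantive problem is your description of the SV-specific fifth region, which is the genuinely new content of the theorem. According to the paper (Section \ref{subsec: sv} and Fig. \ref{fig: sv}), that equilibrium is $\bm d^*=(0,D)$ with $\bm e^*=(1,0)$, and it arises when the cost is \emph{high} ($c>c_H$) and $D$ is small: absent incentives, member $H$ would not exert effort at all, and member $L$---herself unwilling to exert effort---contributes $D$ precisely to make $H$'s effort exertion profitable, yielding $\bm e^*=(1,0)$, not $(1,1)$. You instead assert that $H$ ``already exerts effort unprompted'' and that $L$'s contribution ``tips $\Gamma$ to $\bm e^*=(1,1)$''; this places the region in the wrong part of the $(c,D)$ plane (it sits above $c_H$, where unprompted effort by $H$ does not occur) and misidentifies the induced effort profile. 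The mechanical enumeration in your Steps 1--3, if executed faithfully, would surface the correct region, but the Stage-\uppercase\expandafter{\romannumeral1} argument you sketch for $\bm d^*=(0,D)$ would not check out against the actual payoff inequalities, so as written the proposal does not establish the key case of the theorem.
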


\begin{figure}[t]
	\centerline{\includegraphics[width=.95\linewidth]{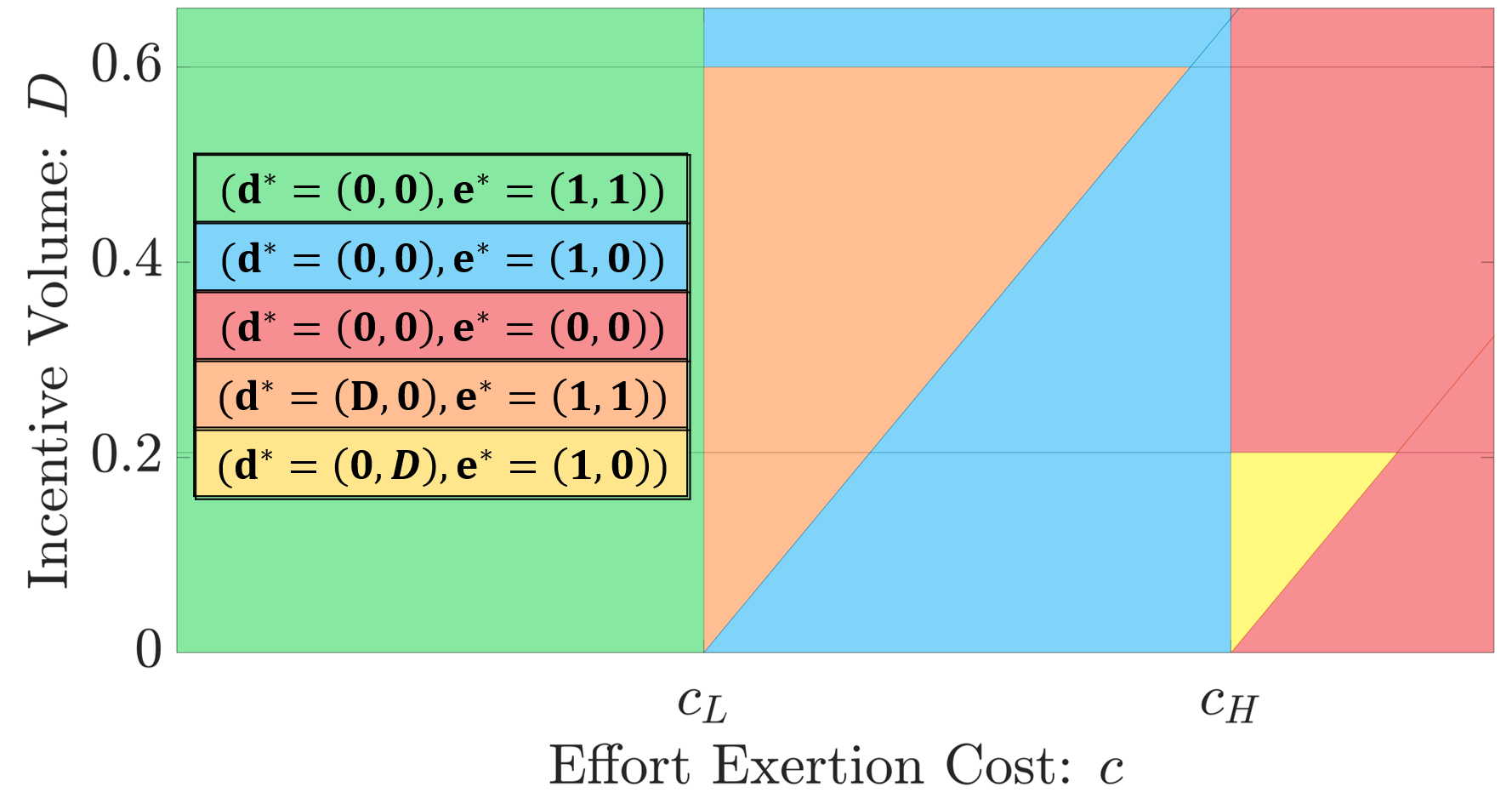}}
	\caption{Illustration of the NE under SV.}
	\label{fig: sv}
\end{figure}

We provide a detailed proof and the closed-form solutions to the equilibria in Section \uppercase\expandafter{\romannumeral5} of the online appendix \cite{online}. We find that four (out of five) equilibrium profiles under SV are identical to those under OA, and their intuitions are similar. Hence, we focus on the additional NE profile and explain its intuition as follows (illustrated in Fig. \ref{fig: sv}).

$\bullet$ $(\bm{d}^* = (0,D), \bm{e}^* = (1,0))$ \textit{(the yellow part):}
Surprisingly, there exists an NE where the low-valuation member incentivizes the high-valuation member to exert effort. 
When the cost $c$ is higher than $c_H$, but the incentive volume $D$ is small, member $L$ is reluctant to exert effort herself. However, she can contribute a small incentive so that member $H$ finds it beneficial to exert effort and generate a better team solution.

In all the mechanisms, we observe that whenever the low-valuation member exerts effort, the high-valuation member also exerts effort at the equilibrium. We formalize this result in Corollary \ref{non01}.

\begin{cor} \label{non01}
	At an equilibrium under any allocation mechanism, we have $e_H^\ast \geq e_L^\ast$. 
\end{cor}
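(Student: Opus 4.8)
The plan is to exploit the binary structure of effort: since each $e_m^\ast\in\{0,1\}$, the claim $e_H^\ast\ge e_L^\ast$ can fail only at the single profile $\bm e^\ast=(0,1)$, so it suffices to show that no equilibrium under EA, OA, or SV has $e_H^\ast=0$ and $e_L^\ast=1$. I would argue by contradiction: assume some mechanism $\text{z}$ admits an equilibrium $(\bm d^\ast,\bm e^\ast)$ with $\bm e^\ast=(0,1)$, and show that member $H$ strictly prefers to deviate to $e_H=1$ in Stage \uppercase\expandafter{\romannumeral2} (holding $\bm d^\ast$ and $e_L=1$ fixed), which contradicts equilibrium.

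The two ingredients are: (i) Lemma \ref{lem: team acc}, which gives that exerting effort adds exactly $\frac{2a-1}{4}$ to $P_T$ whether the partner is idle or active, since $\frac{2a+1}{4}-\frac12=a-\frac{2a+1}{4}=\frac{2a-1}{4}$; and (ii) the payoff form \eqref{payoff}. Combining them, member $H$'s gain from switching $e_H:0\to1$ is
\[
\Delta U_H=\frac{2a-1}{4}\,V_H+\big[p_H^{\text{z}}(1,1)-p_H^{\text{z}}(0,1)\big]\sum_{n\in\mathcal{M}}d_n^\ast-c,
\]
whereas the equilibrium requires that member $L$ not gain by switching $e_L:1\to0$ given $e_H=0$, i.e.,
\[
\frac{2a-1}{4}\,V_L+\big[p_L^{\text{z}}(0,1)-p_L^{\text{z}}(0,0)\big]\sum_{n\in\mathcal{M}}d_n^\ast\ \ge\ c .
\]
Substituting this lower bound on $c$ into the expression for $\Delta U_H$ gives
\[
\Delta U_H\ \ge\ \frac{2a-1}{4}(V_H-V_L)+\Big(\big[p_H^{\text{z}}(1,1)-p_H^{\text{z}}(0,1)\big]-\big[p_L^{\text{z}}(0,1)-p_L^{\text{z}}(0,0)\big]\Big)\sum_{n\in\mathcal{M}}d_n^\ast .
\]

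It then remains to check, for each mechanism, that the coefficient of $\sum_{n\in\mathcal{M}}d_n^\ast$ is nonnegative. For EA it equals $0$ by \eqref{eq: ea}. For OA, Lemma \ref{lem: oa} gives $p_H^{\text{OA}}(1,1)-p_H^{\text{OA}}(0,1)=\frac{2a^2-2a+1}{2}-\frac14=(a-\frac12)^2$ and $p_L^{\text{OA}}(0,1)-p_L^{\text{OA}}(0,0)=0$, so the coefficient is $(a-\frac12)^2\ge0$. For SV, I would expand \eqref{eq: sv}--\eqref{sv} at the profiles $(0,0)$, $(0,1)$, $(1,1)$, obtaining $p_H^{\text{SV}}(0,1)=0$, $p_L^{\text{SV}}(0,1)=1$, $p_H^{\text{SV}}(1,1)=\frac12$, and $p_m^{\text{SV}}(0,0)=\frac12$ (the natural normalization when both marginal Shapley contributions vanish), so the coefficient is $\frac12-\frac12=0$. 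Since $\sum_{n\in\mathcal{M}}d_n^\ast\ge0$, $a>\frac12$, and $V_H>V_L$, in every case $\Delta U_H\ge\frac{2a-1}{4}(V_H-V_L)>0$, contradicting the optimality of $e_H^\ast=0$. Hence $\bm e^\ast=(0,1)$ cannot arise, and $e_H^\ast\ge e_L^\ast$.

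The step I expect to be the main obstacle is the SV case: correctly expanding \eqref{eq: sv} for two players at the three relevant effort profiles and, in particular, fixing the value of $p_m^{\text{SV}}(0,0)$ when $\sum_{n}\phi_n(\bm e)=0$ so that it is consistent with the convention used elsewhere in the analysis. Everything else reduces to the two one-line best-response inequalities above, together with the observation—immediate from Lemma \ref{lem: team acc}—that the marginal accuracy value of effort does not depend on the partner's effort, so the higher-valuation member always has a (weakly) stronger incentive to exert effort than the lower-valuation member.
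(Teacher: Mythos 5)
Your overall route is the same as the paper's: argue by contradiction that the profile $\bm e^\ast=(e_H^\ast,e_L^\ast)=(0,1)$ cannot be an equilibrium, using that $L$'s best-response condition (not deviating from $e_L=1$ given $e_H=0$) forces $c$ to be small enough that $H$, who values accuracy strictly more and faces the same marginal accuracy gain $\frac{2a-1}{4}$ (Lemma \ref{lem: team acc}), strictly gains by switching to $e_H=1$. The EA and OA computations are correct.

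The one concrete slip is exactly where you anticipated trouble: the SV proportions at the asymmetric profile. Expanding \eqref{eq: sv}--\eqref{sv} with the coalition value of the empty set taken as $0$ (the convention consistent with the thresholds $\frac{2a+1}{3-2a}$ and $\frac{14a-5}{6a-1}$ used in the paper), one gets $\phi_H(0,1)=\frac{3-2a}{8}$ and $\phi_L(0,1)=\frac{6a-1}{8}$, hence
\begin{equation*}
p_H^{\text{SV}}(0,1)=\frac{3-2a}{2(2a+1)},\qquad p_L^{\text{SV}}(0,1)=\frac{6a-1}{2(2a+1)},
\end{equation*}
not $0$ and $1$ as you wrote; also $p_m^{\text{SV}}(0,0)=\frac12$ holds because $\phi_m(0,0)=\frac14$ each, not because the Shapley values vanish. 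Fortunately the error is harmless for your argument: with the correct values, $p_H^{\text{SV}}(1,1)-p_H^{\text{SV}}(0,1)=p_L^{\text{SV}}(0,1)-p_L^{\text{SV}}(0,0)=\frac{2a-1}{2a+1}$, so the coefficient of $\sum_n d_n^\ast$ is still exactly $0$ and the contradiction $\Delta U_H\geq\frac{2a-1}{4}(V_H-V_L)>0$ goes through unchanged. With that correction, your proof matches the paper's contradiction argument in both structure and substance.
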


Corollary 1 is proved by contradiction in Section \uppercase\expandafter{\romannumeral6} of the online appendix \cite{online}. The reason behind Corollary \ref{non01} is that as long as the low-valuation member can benefit from her own effort exertion, the high-valuation member can always improve her payoff by exerting effort since she values the team solution more. 

Next, we compare the equilibrium results under the three incentive allocation mechanisms.

\section{Mechanism Performance Comparison} \label{sec: comparison}
In this section, we compare the team performance at equilibrium under the three incentive allocation mechanisms in terms of 
\begin{itemize}
	\item Team solution accuracy $p_T^*$ (see \eqref{eq: team acc}).
	\item Social welfare $w^*=U_H^* + U_L^*$, i.e., the summation of members' payoffs.
\end{itemize}

\begin{thm}[Team Equilibrium Performance Comparison] \label{thm: comp} 
	SV has the best team performance in terms of team solution accuracy and social welfare, i.e.,
	\begin{equation}\label{eq: comp}
		\begin{aligned}		
			& p_T^{\ast\text{SV}} \geq p_T^{*\text{OA}} \geq p_T^{*\text{EA}},\\
			& w^{*\text{SV}} \geq w^{*\text{OA}} \geq w^{*\text{EA}}.
		\end{aligned}
	\end{equation}
	The detailed comparison is illustrated in Fig. \ref{fig: comp}.	
\end{thm}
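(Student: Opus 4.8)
The plan is to prove Theorem~\ref{thm: comp} by exhaustive comparison of the equilibrium outcomes region-by-region in the $(c,D)$ parameter plane, leveraging the closed-form equilibria from Theorems~\ref{thm: ea}, \ref{thm: oa}, and \ref{thm: sv}. The key structural observation is that the team solution accuracy $P_T(\bm{e})$ depends only on $e_H^*+e_L^*$ (Lemma~\ref{lem: team acc}) and is strictly increasing in it; hence to establish $p_T^{*\text{SV}} \geq p_T^{*\text{OA}} \geq p_T^{*\text{EA}}$ it suffices to show that, in every $(c,D)$ region, the total effort $e_H^*+e_L^*$ under SV is at least that under OA, which in turn is at least that under EA. Since EA always yields $\bm{d}^*=(0,0)$ with effort determined purely by the thresholds $c_L,c_H$ (Theorem~\ref{thm: ea}), and since the green region $(c \le c_L)$ gives $(1,1)$ under all three mechanisms and the red region $(c>c_H)$ gives $(0,0)$ under EA, the only interesting comparisons are: (i) the moderate-cost band $c_L < c \le c_H$, where OA and SV may reach the orange equilibrium $((D,0),(1,1))$ with total effort $2$ versus EA's total effort $1$; and (ii) the high-cost band $c>c_H$, where SV's extra yellow equilibrium $((0,D),(1,0))$ gives total effort $1$ versus $0$ for both OA and EA.

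Concretely, I would proceed as follows. First I would partition the $(c,D)$ plane according to the finest common refinement of the three equilibrium diagrams (Figs.~\ref{fig: ea}, \ref{fig: oa}, \ref{fig: sv}), noting that the four shared equilibria of OA and SV occupy identical regions, so SV differs from OA only by carving the yellow region out of what is OA's red region. Second, for the accuracy inequality, in each cell I would read off $e_H^*+e_L^*$ for each mechanism and verify the chain $\text{(SV)}\ge\text{(OA)}\ge\text{(EA)}$ cell by cell — this reduces to checking that whenever OA reaches an equilibrium with total effort $k$, EA reaches one with total effort $\le k$ (true because EA's effort profile is the pointwise-minimal response and OA's incentive transfer can only raise effort), and that SV's yellow region lies inside OA's and EA's zero-effort region. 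Third, for the social welfare inequality I would write $w^* = U_H^* + U_L^*$ and observe from \eqref{payoff} that the incentive-transfer terms telescope: under a budget-balanced mechanism $\sum_m p_m^{\text{z}}(\bm{e})\sum_n d_n - \sum_m d_m = 0$, and under OA (weakly BB) this sum is $\le 0$, so $w^* = (V_H+V_L)\,P_T(\bm{e}^*) - c\,(e_H^*+e_L^*) + (\text{transfer surplus})$ where the transfer surplus is $0$ for EA and SV and $\le 0$ for OA. Thus within any cell where all three mechanisms induce the same effort profile, $w^{*\text{EA}}=w^{*\text{SV}}\ge w^{*\text{OA}}$; and in the cells where SV (or OA) induces strictly more effort than EA, one must check that the gain $(V_H+V_L)\,\Delta P_T$ outweighs the extra cost $c\,\Delta(e_H^*+e_L^*)$ — but this is guaranteed because that extra effort is an equilibrium best response, so the deviating member's own payoff increment $V_m \Delta P_T + \Delta(\text{her allocation}) - \Delta d_m - c\,\Delta e_m \ge 0$, and summing the corresponding inequality over members together with budget balance yields $\Delta w \ge 0$.

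The main obstacle will be the OA-versus-EA social welfare comparison in the orange region $((D,0),(1,1))$ under OA versus $((0,0),(1,0))$ under EA: here OA induces one extra unit of effort (good for accuracy, $\Delta P_T = a - \frac{2a+1}{4} = \frac{2a-1}{4}>0$) but is only weakly budget balanced, so the transfer term is strictly negative, namely member $H$ contributes $D$ and the pool is returned with total proportion $\sum_m p_m^{\text{OA}} = 2a^2-2a+1 < 1$, leaking $(1-(2a^2-2a+1))D = 2a(1-a)D$ of welfare. I must show $(V_H+V_L)\frac{2a-1}{4} \ge c + 2a(1-a)D$ on the orange region. This does not follow from a single member's best-response inequality alone; instead I would combine member $L$'s incentive-compatibility inequality for exerting effort (which lower-bounds the accuracy gain relative to $c$ plus $L$'s allocation loss) with member $H$'s incentive-compatibility inequality for contributing $D$ (which bounds $D$ in terms of $H$'s allocation gain and accuracy valuation), and use the orange region's defining inequalities $\hat D^{\text{OA}}_{\text{low}}(c) \le D \le \hat D^{\text{OA}}_{\text{high}}(c)$ to close the gap — in particular $D \le \hat D^{\text{OA}}_{\text{high}}(c)$ is precisely the condition that $H$ does not lose from contributing, which should bound the welfare leakage. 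The high-heterogeneity hypothesis $V_H/V_L > \max\{\frac{14a-5}{6a-1},\frac{2a+1}{3-2a}\}$ will be used exactly here and in the yellow-region comparison to ensure the relevant thresholds are ordered so that these regions are non-degenerate and the inequalities go through. Finally I would assemble the cell-by-cell verifications into the two displayed chains and point to Fig.~\ref{fig: comp} for the consolidated picture.
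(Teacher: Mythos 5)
Your overall route — a cell-by-cell comparison of the closed-form equilibria in the $(c,D)$ plane, driven by (i) the fact that $P_T$ is increasing in total effort and (ii) the welfare decomposition $w^*=(V_H+V_L)P_T(\bm e^*)-c(e_H^*+e_L^*)+\text{transfer surplus}$, with the surplus zero under the budget-balanced EA and SV and nonpositive under the weakly budget-balanced OA — is essentially the paper's own argument (this is exactly the ``Region 1''/``Region 2'' logic in Section~\ref{sec: comparison}, made precise in the appendix via the closed-form thresholds). Your identification of the hard welfare cell (OA's orange equilibrium versus EA's $((0,0),(1,0))$) and your plan to close it by combining member $L$'s effort IC (which bounds $c-c_L$ by a multiple of $D$) with member $H$'s contribution IC (which bounds $D$ in terms of $c_H$) does go through: those two inequalities imply $c+2a(1-a)D\le c_L+c_H$, which is what the leakage term requires.

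There is, however, one genuine gap: you assert that ``the four shared equilibria of OA and SV occupy identical regions, so SV differs from OA only by carving the yellow region out of OA's red region.'' Theorem~\ref{thm: sv} only says the equilibrium \emph{profiles} coincide, not the $(c,D)$ regions in which they arise, and the claim is in fact false as stated: because SV returns the full pool and rewards a member's own effort more strongly than OA (under OA the agreement probability, hence $p_m^{\text{OA}}$, does not improve when only one member exerts effort), the thresholds defining the cooperative region differ, with $\hat D^{\text{SV}}_{\text{low}}(c)<\hat D^{\text{OA}}_{\text{low}}(c)$ and $\hat D^{\text{SV}}_{\text{high}}(c)>\hat D^{\text{OA}}_{\text{high}}(c)$; SV's $((D,0),(1,1))$ region strictly contains OA's. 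Your final chains only need the containment (OA's higher-effort regions lie inside SV's), so the conclusion survives, but this containment is precisely the substance of the SV-versus-OA leg and must be verified from the closed-form thresholds rather than assumed; your shortcut would also silently skip the cells where SV attains $(1,1)$ while OA remains at $(1,0)$, which are part of the strict improvement the theorem records. Relatedly, your generic ``sum the best-response inequalities plus budget balance'' justification should be phrased as comparing each member's equilibrium payoff with her payoff at the post-deviation continuation (which in the relevant cells coincides with the EA/OA outcome); stated as a blanket principle across two different mechanisms and two-coordinate strategy changes it is not valid, although the concrete IC-plus-threshold computations you outline for the orange and yellow cells do close correctly.
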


\begin{figure}[t]
	\centerline{\includegraphics[width=.95\linewidth]{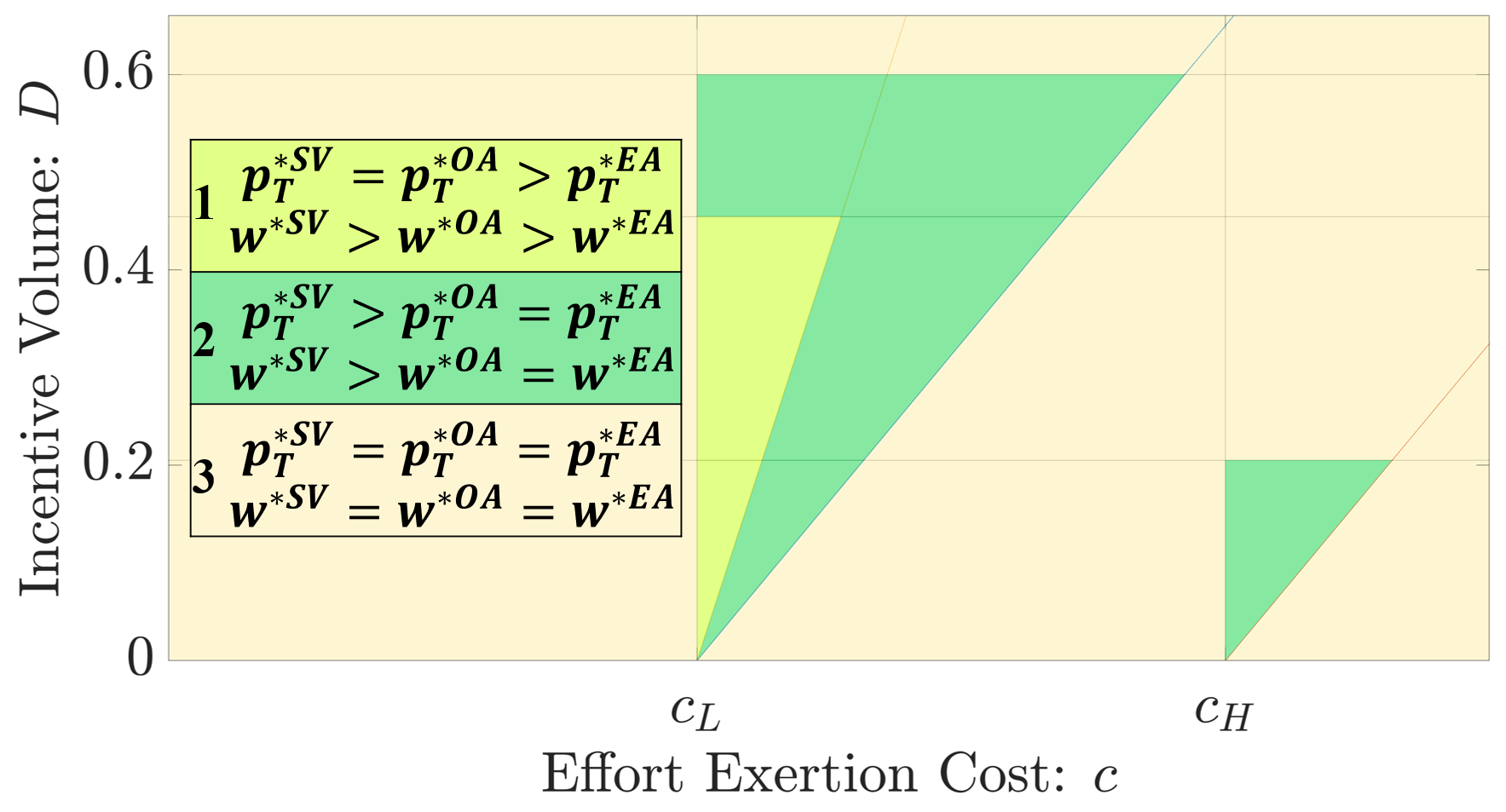}}
	\caption{Team equilibrium performance comparison at NE.}
	\label{fig: comp}
\end{figure}

One can find the closed-form comparison results in Section \uppercase\expandafter{\romannumeral7} of the online appendix \cite{online}. Next, we explain the intuition behind Theorem \ref{thm: comp}. As mentioned in Section \ref{EA}, EA fails to incentivize effort exertion since no one will contribute incentives. Hence, EA has the worst team solution accuracy and social welfare performance. 

Next, we discuss the team performance under OA and SV. First, recall the BB property in Table \ref{tbl: model summary}. Given the same effort exertion profile, the social welfare under SV is no smaller than that under OA (see Region 1 in Fig. \ref{fig: comp}) since the incentives can be fully allocated to the team members under SV. Second, a member under SV will always receive a non-negative payoff gain via her effort exertion. However, the consistency probability will not increase with one member's effort exertion alone under OA. Intuitively, given the effort exertion cost, it is easier to incentivize members under SV than OA to exert effort, which leads to better team performance (see Region 2 in Fig. \ref{fig: comp}).

To conclude, we have shown that SV performs the best regarding both team solution accuracy and social welfare.

\section{Numerical Results} \label{sec: simulation}

In this section, we conduct numerical experiments to validate our theoretical results and identify new insights. For the experiment setting, we set $a = 0.8,\ V_L = 1,$ and $V_H = 2$.

\begin{figure}[t]
	\centerline{\includegraphics[width=.9\linewidth]{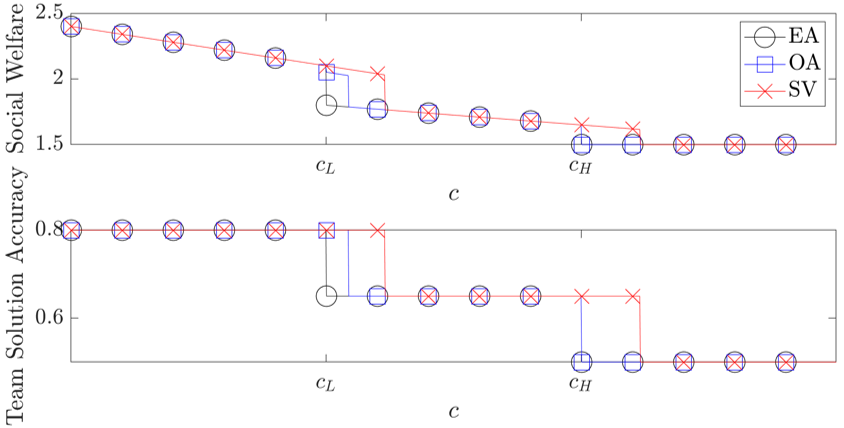}}
	\caption{Impact of $c$ on the team performance.}
	\label{fig: comp c}
\end{figure}

Fig. \ref{fig: comp c} shows how the team solution accuracy and social welfare depend on the effort exertion cost $c$ when $D=0.15$. From Fig. \ref{fig: comp c}, we observe:

\begin{observation}[Impact of Effort Exertion Cost on Team Equilibrium Performance]\label{ob: c decrease}
	Both the social welfare and the team solution accuracy are non-increasing in the effort exertion cost $c$ under three mechanisms.
\end{observation}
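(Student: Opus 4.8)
The plan is to fix the incentive volume $D$ and exploit the piecewise structure of the equilibria established in Theorems~\ref{thm: ea}--\ref{thm: sv}. For each of the three mechanisms, holding $D$ fixed, the $c$-axis is partitioned into finitely many intervals on each of which the equilibrium profile $(\bm{d}^*,\bm{e}^*)$ is constant (these are exactly the regions in Figs.~\ref{fig: ea}--\ref{fig: sv} intersected with a vertical line). On such an interval the team solution accuracy $p_T^* = P_T(\bm{e}^*)$ is constant by Lemma~\ref{lem: team acc}, and summing \eqref{payoff} over $m\in\{H,L\}$ gives
\begin{equation*}
	w^* = (V_H+V_L)\,P_T(\bm{e}^*) + \Big(\textstyle\sum_{m}p_m^\text{z}(\bm{e}^*)\Big)\Big(\textstyle\sum_n d_n^*\Big) - \textstyle\sum_m d_m^* - c\,\textstyle\sum_m e_m^*,
\end{equation*}
which is affine in $c$ with slope $-\sum_m e_m^*\le 0$. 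Hence both $p_T^*$ and $w^*$ are non-increasing in $c$ within each interval, and the task reduces to ruling out an upward jump at each interval boundary.

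For the team solution accuracy I would first show that the total effort $\sum_m e_m^*$ is non-increasing as $c$ grows: inspecting Figs.~\ref{fig: ea}--\ref{fig: sv}, every vertical line crosses the equilibrium regions in the order $e_H^*+e_L^*=2$, then $=1$, then $=0$. Since Lemma~\ref{lem: team acc} yields the chain $\tfrac12 \le \tfrac{2a+1}{4}\le a$, the accuracy $P_T(\bm{e}^*)$ inherits this monotonicity across boundaries, which combined with the within-interval argument settles the claim for $p_T^*$.

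For social welfare I would use the budget-balance facts in Table~\ref{tbl: model summary}. Under EA and SV the transfer term equals $\sum_m d_m^*$, so $w^* = (V_H+V_L)P_T(\bm{e}^*) - c\sum_m e_m^*$; under OA the weakly-BB transfer leaves a non-negative surplus loss that is independent of $c$ on each interval (e.g. $2a(1-a)D$ on the orange region $(\bm{d}^*,\bm{e}^*)=((D,0),(1,1))$). At a boundary $c_0$ I would compare the one-sided limits of $w^*$, showing the downward step in $(V_H+V_L)P_T(\bm{e}^*)$ dominates the simultaneous change in $c\sum_m e_m^*$ and in the surplus loss. This reduces to a finite list of algebraic inequalities at the threshold costs $c_L$, $c_H$ and the incentive thresholds $\hat{D}_\text{low}^\text{OA}(c)$, $\hat{D}_\text{high}^\text{OA}(c)$ (and their SV analogues); for instance, under EA the welfare drops by exactly $\tfrac{2a-1}{4}V_H\ge 0$ as $c$ crosses $c_L$, and the remaining boundaries are checked identically from the closed forms in the online appendix.

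The main obstacle is this cross-boundary no-upward-jump inequality for social welfare in the OA and SV cases, where passing a boundary can simultaneously toggle a member's contribution (entering/leaving the orange region under OA, or the yellow region $((0,D),(1,0))$ under SV) and change the effort profile, so the transfer term, the contribution term, and the effort term all move at once. Establishing it there requires the explicit threshold expressions and leans on the heterogeneity assumption $V_H/V_L > \max\{\tfrac{14a-5}{6a-1},\tfrac{2a+1}{3-2a}\}$ that underpins the equilibrium characterization.
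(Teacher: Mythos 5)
The paper never proves this statement: Observation~\ref{ob: c decrease} is presented purely as an empirical finding read off the numerical experiment in Fig.~\ref{fig: comp c} (for fixed $D=0.15$), accompanied only by the intuitive remark that a larger cost makes members harder to incentivize. Your proposal therefore takes a genuinely different and more ambitious route --- an analytical proof for all $(c,D)$ --- and its decomposition is the natural one: within each equilibrium region $p_T^*$ is constant and $w^*$ is affine in $c$ with slope $-\sum_m e_m^*\le 0$, so everything reduces to ruling out upward jumps at region boundaries. Two caveats keep this a plan rather than a finished proof. First, the claim that every vertical line in Figs.~\ref{fig: ea}--\ref{fig: sv} meets the regions in the order ``total effort $2$, then $1$, then $0$'' is justified only by figure inspection; under OA and SV the boundaries $\hat{D}_\text{low}^\text{OA}(c)$, $\hat{D}_\text{high}^\text{OA}(c)$ (and their SV analogues) depend on $c$, so you must argue from their closed forms (e.g., that $\hat{D}_\text{low}^\text{OA}(c)$ is increasing in $c$) that a fixed-$D$ slice cannot re-enter a higher-effort region at larger $c$. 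Second, you explicitly defer the welfare no-upward-jump inequalities at the OA/SV boundaries --- which you correctly identify as the main obstacle, and where the transfer, contribution, and effort terms all move simultaneously --- to ``a finite list of algebraic inequalities'' that are never carried out and require the appendix's threshold expressions. With those two verifications supplied, your argument would establish the observation rigorously, which is strictly more than the paper itself does; as written, it is a correct strategy with the decisive computations left open.
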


As the effort exertion cost increases, members become more reluctant to exert effort, making it more difficult to incentivize them. Consequently, there is a decrease in the team solution accuracy and social welfare.

\begin{figure}[t]
	\centerline{\includegraphics[width=.9\linewidth]{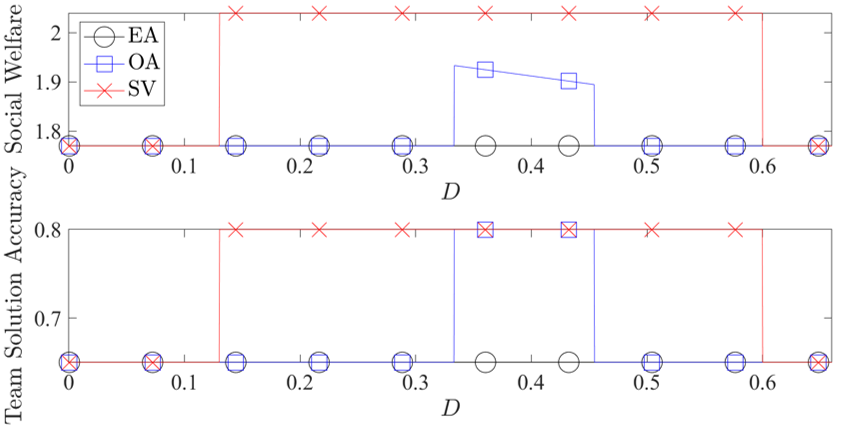}}
	\caption{Impact of $D$ on the team performance.}
	\label{fig: comp d}
\end{figure}

Fig. \ref{fig: comp d} shows how the team solution accuracy and social welfare depend on the incentive volume $D$ when $c=0.18$. From this figure, we observe:

\begin{observation}[Impact of Incentive Volume on Team Performance] \label{ob: d}
	Under SV and OA, both the social welfare and team solution accuracy first increase and then decrease in the incentive volume $D$.
\end{observation}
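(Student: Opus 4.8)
The plan is to trace the equilibrium path as the incentive volume $D$ grows while the effort cost $c$ is held fixed, and to reduce both performance metrics to functions of the equilibrium effort profile $\bm e^\ast$. By Lemma~\ref{lem: team acc}, $p_T^\ast=P_T(\bm e^\ast)$ depends only on $\bm e^\ast$ and is strictly increasing in the number of members who exert effort; in particular $P_T(1,1)=a>\tfrac{2a+1}{4}=P_T(1,0)=P_T(0,1)>\tfrac12=P_T(0,0)$. Summing \eqref{payoff} over the two members gives
\begin{equation*}
w^\ast = (V_H+V_L)\,P_T(\bm e^\ast) - c\,(e_H^\ast+e_L^\ast) + \Big(\textstyle\sum_{m\in\mathcal M} p_m^{\text{z}}(\bm e^\ast) - 1\Big)\sum_{m\in\mathcal M} d_m^\ast .
\end{equation*}
Under SV the mechanism is budget balanced (Table~\ref{tbl: model summary}), so the last term vanishes and $w^\ast$ is a function of $\bm e^\ast$ alone. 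Under OA it is only weakly budget balanced, and by Lemma~\ref{lem: oa} the bracket equals $2a(a-1)\le 0$ when both members exert effort and $-\tfrac12$ otherwise; hence, for a fixed effort profile, $w^\ast$ is non-increasing in the total contribution $\sum_m d_m^\ast$.

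Next I would split on the cost regime used in Theorems~\ref{thm: oa} and~\ref{thm: sv}, with $c_L,c_H$ as defined in Theorem~\ref{thm: ea}. Consider first the moderate regime $c\in(c_L,c_H]$, which is the one displayed in Fig.~\ref{fig: comp d}. The closed-form characterization there yields two thresholds $\hat D^{\text{z}}_{\text{low}}(c)\le\hat D^{\text{z}}_{\text{high}}(c)$ such that the equilibrium effort profile is $(1,0)$ with $\bm d^\ast=(0,0)$ for $D<\hat D^{\text{z}}_{\text{low}}(c)$, is $(1,1)$ with $\bm d^\ast=(D,0)$ for $\hat D^{\text{z}}_{\text{low}}(c)\le D\le\hat D^{\text{z}}_{\text{high}}(c)$, and returns to $(1,0)$ with $\bm d^\ast=(0,0)$ for $D>\hat D^{\text{z}}_{\text{high}}(c)$. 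Therefore $p_T^\ast$ equals $\tfrac{2a+1}{4}$ on the two outer intervals and $a$ on the middle one: it is constant within each interval, jumps up at $\hat D^{\text{z}}_{\text{low}}(c)$ and jumps down at $\hat D^{\text{z}}_{\text{high}}(c)$, hence non-decreasing then non-increasing. For $w^\ast$, the outer intervals both give $(V_H+V_L)\tfrac{2a+1}{4}-c$ (no contribution, no leakage); on the middle interval, SV gives the constant $(V_H+V_L)a-2c$, with the upward jump at $\hat D^{\text{SV}}_{\text{low}}(c)$ being exactly the incentive-compatibility inequality that makes the contributing member willing to contribute (established in the proof of Theorem~\ref{thm: sv}), while OA gives $(V_H+V_L)a-2c+2a(a-1)D$, which is strictly decreasing in $D$, so $w^\ast$ peaks at $\hat D^{\text{OA}}_{\text{low}}(c)$ and is non-increasing thereafter. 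The remaining regimes are easier: for $c\le c_L$ the equilibrium effort is $(1,1)$ for every $D$ and both metrics are constant; for $c>c_H$, under OA neither member ever contributes or exerts effort and both metrics are flat, while under SV the extra equilibrium $(\bm d^\ast=(0,D),\bm e^\ast=(1,0))$ occurs for $D$ in an intermediate band sandwiched between $(\bm d^\ast=(0,0),\bm e^\ast=(0,0))$ equilibria, again producing a single up-then-down pattern.

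The main obstacle is verifying the sign of the welfare jump at the lower threshold under OA: since OA leaks a fraction $1-(2a^2-2a+1)=2a(1-a)$ of the pool when both members exert effort, one must show that the accuracy-and-cost gain $(V_H+V_L)\tfrac{2a-1}{4}-c$ strictly exceeds the leakage $2a(1-a)\hat D^{\text{OA}}_{\text{low}}(c)$ evaluated at the threshold. Substituting the closed form of $\hat D^{\text{OA}}_{\text{low}}(c)$ (the smallest $D$ at which the low-valuation member's effort best response flips, which is affine in $c$ and vanishes as $c\downarrow c_L$) and simplifying gives the claim; the analogous check under SV, and the sign check for the SV yellow-region band when $c>c_H$, are similar but cleaner because the leakage term is absent. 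Once these are confirmed, the piecewise structure above shows that both $p_T^\ast$ and $w^\ast$ are quasi-concave in $D$ — non-decreasing up to some $D^\ast(c)$ and non-increasing thereafter — which is the precise meaning of ``first increase and then decrease'' in Observation~\ref{ob: d}.
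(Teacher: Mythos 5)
Observation~\ref{ob: d} is not proved analytically in the paper at all: it is read off the numerical experiment of Fig.~\ref{fig: comp d} (with $a=0.8$, $V_L=1$, $V_H=2$, $c=0.18$), and the surrounding text only supplies the regime intuition (small $D$: no contribution and only member $H$ exerts effort; moderate $D$: member $H$ contributes and both exert effort; large $D$: nobody contributes). Your proposal reaches the same conclusion by a genuinely different, analytic route: you fix $c$, trace the equilibrium path in $D$ using the threshold structure of Theorems~\ref{thm: oa} and~\ref{thm: sv}, and reduce both metrics to the equilibrium effort profile via Lemma~\ref{lem: team acc} together with the welfare decomposition $w^\ast=(V_H+V_L)P_T(\bm{e}^\ast)-c(e_H^\ast+e_L^\ast)+\bigl(\sum_{m}p_m^{\text{z}}(\bm{e}^\ast)-1\bigr)\sum_{m}d_m^\ast$, which cleanly exploits BB for SV and the weak-BB leakage ($2a(1-a)$ or $\tfrac12$) for OA. What this buys is generality over the whole moderate-cost band $c\in(c_L,c_H]$ — precisely the regime of the figure, since $c=0.18\in(0.15,0.3]$ for the chosen parameters — whereas the paper's evidence covers only the plotted instance; conversely, your argument leans on the closed-form thresholds that live in the online appendix. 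Two caveats: the step you yourself flag as the main obstacle (that $(V_H+V_L)\tfrac{2a-1}{4}-c$ exceeds the OA leakage $2a(1-a)\hat{D}_\text{low}^\text{OA}(c)$ at the lower threshold) is asserted rather than carried out; it does in fact hold, but the clean way to see it is to use that the orange region exists only while $\hat{D}_\text{low}^\text{OA}(c)\le\hat{D}_\text{high}^\text{OA}$, which caps the leakage term — as written this remains an unverified claim in your proof. Also, outside the moderate-cost regime the ``increase then decrease'' pattern degenerates to a flat profile, and your description of the SV equilibrium $(\bm{d}^\ast=(0,D),\bm{e}^\ast=(1,0))$ for $c>c_H$ as an interior band should be checked against Fig.~\ref{fig: sv}, where it is described as occurring for small $D$; neither point affects the conclusion for the setting the observation actually refers to.
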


When the incentive volume is too small to incentivize effort exertion (e.g., $0 < D < 0.12$), no one will contribute incentives, and only member $H$ can benefit from exerting effort. When the incentive volume is enough to motivate member $L$'s effort exertion (e.g., $0.33 < D < 0.45$), member $H$ will contribute incentives, and both members will exert effort to improve the team performance. Finally, when the incentive volume is too large (e.g., $D > 0.6$), both members will choose not to contribute any incentives, and thus the team solution accuracy and social welfare decrease.

Fig. \ref{fig: comp c} and Fig. \ref{fig: comp d} also validate Theorem \ref{thm: comp} by showing that SV leads to team solution accuracy and social welfare no smaller than EA and OA.

\section{Conclusion} \label{sec: conclusion}
In this paper, we present the first theoretical study regarding incentive design in a decentralized IEWV problem where no central entity exists. We propose a decentralized incentive framework where members act as both incentive contributors and task solvers. We show the counter-intuitive result that there exists an equilibrium under SV where the low-valuation member contributes incentives while the high-valuation member does not. We further compare the three incentive allocation mechanisms via theoretical analysis and numerical experiments. We show that SV outperforms EA and OA in terms of both team solution accuracy and social welfare.

For future work, we plan to consider a more general multi-member scenario. Moreover, we will consider the interaction between members with multi-dimensional heterogeneity, such as accuracy levels and effort exertion costs. The analysis for bounded rational members in a decentralized IEWV scenario also deserves further research attention.

\bibliographystyle{IEEEtran}
\bibliography{ref}

\end{document}